\documentclass[pdflatex,sn-mathphys-num]{sn-jnl}
\usepackage{graphicx}
\usepackage{multirow}
\usepackage{amsmath,amssymb,amsfonts}
\usepackage{amsthm}
\usepackage{mathrsfs}
\usepackage[title]{appendix}
\usepackage{xcolor}
\usepackage{textcomp}
\usepackage{manyfoot}
\usepackage{booktabs}
\usepackage{algorithm}
\usepackage{algorithmicx}
\usepackage{algpseudocode}
\usepackage{listings}
\usepackage{comment}
\usepackage{lineno}
\usepackage{caption} 
\usepackage{geometry}          
\theoremstyle{thmstyleone}
\usepackage{enumerate}
\usepackage{enumitem}
\usepackage{titlesec}
\newtheorem{theorem}{Theorem}

\theoremstyle{thmstyletwo}

\theoremstyle{thmstylethree}
\newtheorem{definition}{Definition}
\newtheorem{lemma}{Lemma}

\theoremstyle{plain}

\newcounter{suppsubsection}

\usepackage{etoolbox}
\newtoggle{includeintoc}
\togglefalse{includeintoc}
\let\oldaddcontentsline\addcontentsline
\renewcommand{\addcontentsline}[3]{
  \iftoggle{includeintoc}{
    \oldaddcontentsline{#1}{#2}{#3}
  }{}
}

\usepackage{hyperref}

\hypersetup{
    colorlinks=true,       
    linkcolor=blue,        
    citecolor=blue,        
    urlcolor=blue,        
    pdfborder={0 0 0},     
    linkbordercolor={1 1 1} 
}
\begin{document}

\title{Exponential quantum space advantage for Shannon entropy estimation in data streams }
\author[1]{\fnm{Weijun} \sur{Feng}}
\equalcont{These authors contributed equally to this work.}

\author*[2]{\fnm{Yongzhen} \sur{Xu}}\email{xuyongzh@gmail.com}
\equalcont{These authors contributed equally to this work.}

\author*[3]{\fnm{Lvzhou} \sur{Li}}\email{lilvzh@mail.sysu.edu.cn}

\author[1]{\fnm{Gongde} \sur{Guo}}

\author*[1]{\fnm{Song} \sur{Lin}}\email{lins95@fjnu.edu.cn}

\affil[1]{\orgdiv{College of Computer and Cyber Security}, \orgname{Fujian Normal University}, \orgaddress{\city{Fuzhou}, \postcode{350117}, \state{Fujian}, \country{China}}}

\affil[2]{\orgdiv{}, \orgname{Quantum Science Center of Guangdong-Hong Kong-Macao Greater Bay Area}, \orgaddress{\city{Shenzhen}, \postcode{518045}, \state{Guangdong}, \country{China}}}

\affil[3]{\orgdiv{Institute of Quantum Computing and Software, School of Computer Science and Engineering}, \orgname{Sun Yat-sen University}, \orgaddress{\city{Guangzhou}, \postcode{510006}, \state{Guangdong}, \country{China}}}

\abstract{

Near--term quantum devices with limited qubits motivate the study of space-bounded quantum computation in the data stream model. We show that Shannon entropy estimation exhibits an exponential separation between quantum and classical space complexity in this setting. Technically, we develop a two--stage quantum streaming algorithm based on a quantum procedure with an explicitly constructed oracle derived from the streaming input. This algorithm achieves logarithmic space complexity in the accuracy parameter over the data stream, whereas any classical streaming algorithm under the same pass complexity requires polynomial space.
In sharp contrast, existing results for Shannon entropy estimation in the quantum query model achieve only a quadratic speedup.  Our work establishes a natural problem with practical applications in computer networking that admits an exponential quantum space advantage, revealing a fundamental gap between quantum query complexity and streaming space complexity.
}

\maketitle

\section{Introduction}\label{sec1}
Shannon entropy is one of the most fundamental statistics for characterizing the distribution and plays a central role in information theory \cite{shannon1948mathematical}. As an increasing amount of real-world data is generated continuously as streams rather than stored as static datasets \cite{muthukrishnan2005data,aggarwal2007data,fragkoulis2024survey}, estimating Shannon entropy in the streaming model has become a basic computational problem. 
Streaming Shannon entropy estimation plays a key role in a wide range of applications, including network anomaly detection, traffic analysis, compressed sensing and related tasks \cite{sigcommLakhinaCD05,sigcommXuZB05,zhao2007data,arackaparambil2009functional,huang2018sparse}. 
In the standard streaming model, a data stream is represented as $A=\langle x_1, x_2,\dots, x_m\rangle$ over an alphabet $[n]$, where elements arrive sequentially and are observed one at a time. Algorithms in this model may perform one or more passes over the stream while using as little space as possible. The required space is measured in bits for the classical streaming model. Over the past two decades, the space complexity of streaming Shannon entropy estimation has been extensively studied and is now well understood in the classical setting~\cite{Stacs06,sodaGuhaMV06,bhuvanagiri2006estimating,amitSODA,Harvey08,li2011new,clifford2013simple}. 

Quantum space complexity has also attracted considerable attention \cite{spaaGall06,stocGavinskyKKRW07,montanaro2016quantum,icalpHamoudiM19,focsKallaugher21,focsKallaugherP22,stocKallaugherPV24,sosaParekhKV25}. Investigating quantum space is motivated by two complementary objectives. The first is to establish provable separations between quantum and classical memory requirements. The second is to develop quantum algorithms for early fault-tolerant quantum computers with a relatively limited number of physical qubits. Recent work has shown that several streaming problems admit exponential quantum space advantages \cite{spaaGall06,stocGavinskyKKRW07,montanaro2016quantum,icalpHamoudiM19,stocKallaugherPV24}, while others provably exhibit no quantum separation \cite{focsKallaugherP22}. These contrasting results indicate that the existence of quantum space advantages depends strongly on the structure of the underlying computational problem. This naturally leads to the following fundamental question: does streaming Shannon entropy estimation admit an exponential quantum space advantage?

In this work, we answer this question affirmatively by showing that streaming Shannon entropy estimation admits an exponential quantum--classical separation in space complexity. Specifically, we develop a multi-pass quantum streaming algorithm whose space complexity has an exponentially better dependence on the accuracy parameter than classical streaming algorithms with comparable pass complexity. Previous results on Shannon entropy estimation in the quantum query model achieve only a quadratic speedup over classical algorithms~\cite{titLiW19,stocBunKT18,GilyenL20,shin2025near,chen2025list}. Our quantum algorithm is obtained through a query--to--streaming transformation by combining the classical entropy estimator~\cite{amitSODA} with the quantum Monte Carlo framework~\cite{montanaro2015quantum} using amplitude estimation~\cite{brassard2002quantum}. A key technical contribution of this work is a new quantum subroutine called Quantum Index--Position Conversion (QIPC). QIPC efficiently converts logical indices in a uniform superposition into the corresponding positions of non-majority elements in the data stream, thereby preparing the quantum state required for implementing the entropy estimator oracle using logarithmic space. Beyond Shannon entropy estimation, QIPC provides a general framework for index-to-position conversion in quantum streaming models and may serve as a useful building block for other space-efficient quantum streaming algorithms.

\section{Results}\label{sec2}

In this section, we first introduce the problem of Shannon entropy estimation in data streams, together with the quantum streaming model. We then present our main results, including quantum algorithms for this problem and corresponding classical hardness results, establishing an exponential quantum advantage. 

\subsection{Problem definition and computational model}\label{Notation and definitions}
For any integer $n> 0$, we define $ [n] = \{1, 2, \dots, n\}.$ $\log$ denotes the base-2 logarithm, and $\ln$ denotes the natural logarithm with base $e$.  For given \(\varepsilon,\delta\in(0,1)\), an algorithm is said to output an \((\varepsilon,\delta)\)-\emph{approximation} of the value $V$ if it returns a value \(\widehat V\) with probability at least \(1-\delta\)  satisfying $(1-\varepsilon)V\ \le\ \widehat V\ \le\ (1+\varepsilon)V$.

We now introduce the Shannon entropy estimation problem in the data stream setting. Let \(A=\langle x_1,x_2,\dots,x_m\rangle\) be an input stream of length \(m\), where each element \(x_j\) belongs to the set $[n] = \{1, \cdots, n\}$. 
	For every symbol \(i\in[n]\) define its frequency in the stream by
	\[
	m_i := \big|\{\, j : a_j = i \,\}\big|,
	\]
	and set the empirical probability
	\[
	p_i := \frac{m_i}{m}.
	\]
	We denote the induced empirical distribution by \(p=(p_1,\dots,p_n)\).
	The Shannon entropy of the empirical distribution
	is defined by
	\[
	H(p)\;:=\;\sum_{i=1}^n -p_i\log p_i.
	\]
Our goal is to design algorithms that compute an $(\varepsilon,\delta)$-approximation of the entropy $H(p)$ while using as few bits of memory as possible.

To further explore potential quantum advantages, we consider the quantum streaming model, which generalizes the classical framework by allowing quantum state manipulation. A quantum streaming algorithm operates in the following manner, with the objective of minimizing the number of qubits required under a fixed number of passes over the stream:
	\begin{enumerate}
		\item \textbf{Initialization.}
		The algorithm prepares an initial quantum state using a limited number of qubits, which constitute its workspace.
		
		\item \textbf{Streaming update.}
		As each element \(x_j\) arrives, the algorithm applies a unitary transformation
		\(U(x_j)\) to its current quantum state.  
		The algorithm may make one or more passes over the stream, but within each pass the elements must be processed in order as they arrive.
		
		\item \textbf{Measurement.}
		After the final pass, the algorithm performs a measurement on the quantum state to produce the output.
	\end{enumerate}

\subsection{Quantum algorithm and classical lower bound}

\begin{table*}[htbp]
\centering
\caption{Comparison of quantum and classical space complexity for Shannon entropy estimation. The $\tilde{O}$ and $\tilde{\Omega}$ notations suppress polylogarithmic factors (e.g., $\log m$, $\log n$, and $\log(1/\delta)$), emphasizing the dependence on $\varepsilon$.}
\begin{tabular}{@{}lcccc@{}}
\toprule
Problem & Passes & Classical space (bits) & Quantum space (qubits) & Separation \\
\midrule

Shannon Entropy &
$\tilde{O}(1/\varepsilon)$ &
$\tilde{\Omega}\left(\frac{1}{ \varepsilon\log^2 \varepsilon^{-1}}\right)$ &
$\tilde{O}(\log(1/\varepsilon)) $ &
Exponential \\

\addlinespace
\bottomrule
\end{tabular}
\label{tab:entropy-comparison}
\end{table*}

To rigorously demonstrate an exponential quantum advantage in Shannon streaming entropy estimation, we pursue two complementary objectives. First, we design a quantum algorithm that efficiently approximates the Shannon entropy of a data stream using minimal space. Second, we establish a classical lower bound, showing that any randomized classical algorithm requires substantially more memory to achieve the same accuracy. Together, these results quantify the separation between quantum and classical streaming approaches.  

\textbf{Quantum upper bound.} Our main quantum result, formally stated in Theorem \ref{main result}, provides an $(\varepsilon,\delta)$-approximation to the Shannon entropy using remarkably small space. For any accuracy $\varepsilon > 0$ and failure probability $\delta > 0$, the algorithm makes $\tilde{O}(\frac{1}{\varepsilon})$
passes over the data stream and uses only  $\tilde{O}(\log\frac{1}{\varepsilon})$
qubits and classical bits. 
That is, the space scales logarithmically with the accuracy $\varepsilon$.  

\textbf{Classical lower bound.} In contrast, any classical streaming algorithm that outputs an $(\varepsilon,\delta)$-approximation with $T \ge 1$ passes must use at least
\[
\Omega\Big(\frac{1}{T \varepsilon^2\log^2 \varepsilon^{-1}}\Big)
\]
bits, formally stated in Theorem \ref{lowerbound}. Unlike the logarithmic scaling of the quantum algorithm, the classical space requirement grows polynomially with $1/\varepsilon$, even for multiple passes.  

As summarized in Table \ref{tab:entropy-comparison}, our results reveal an exponential separation between classical and quantum space requirements for Shannon entropy estimation in data streams under $\tilde{O}(1/\varepsilon)$ passes. In particular, achieving a multiplicative error $\varepsilon$ requires nearly linear space in $1/\varepsilon$ for any randomized classical streaming method. In contrast, a quantum streaming algorithm can accomplish the same task using only logarithmic space in $1/\varepsilon$. Here, the $\tilde{O}$ and $\tilde{\Omega}$ notations suppress polylogarithmic factors.

\section{Discussion}\label{sec12}
The pursuit of exponential quantum--classical separations is a central goal of quantum information science, aiming to identify computational tasks where quantum algorithms can be rigorously shown to outperform classical methods. Despite significant progress, provable exponential quantum--classical separations remain relatively rare and are  known only for a limited number of computational problems~\cite{spaaGall06,stocGavinskyKKRW07,montanaro2016quantum,icalpHamoudiM19,stocKallaugherPV24,focsShor94,jordan2025optimization,Simon94,ChildsCDFGS03,li2024recovering,BenDavidCGKPW20,YamakawaZ22,babbush2023exponential,iandcLiL25a,xu2025provable,qi2026quantum,NEURIPS2024}.  Our work establishes streaming Shannon entropy estimation as another natural problem exhibiting an exponential quantum space advantage. This result highlights the potential of quantum space-efficient computation for memory-constrained data processing and motivates further investigation of quantum space advantages in streaming models.

Our work also raises several open questions. First, our results are obtained in the multi-pass streaming setting. It is natural to ask whether similar exponential separations can be achieved in the single-pass model.
Second, we focus on Shannon entropy. It remains to be understood whether R\'enyi entropy, Tsallis entropy, and other entropy generalizations also admit quantum space advantages. Third, this work focuses on vector data streams. An important direction is to investigate whether similar quantum space advantages exist for matrix streaming problems and other structured data models. 
Recent work~\cite{zhao2026exponential} has demonstrated exponential space advantages for certain matrix problems under sampling access models. However, whether similar advantages can be achieved in the streaming setting remains open.

\section{Methods}

\subsection{Quantum algorithm for Shannon entropy estimation }\label{Quantum upper Bound}

The idea behind our quantum  algorithm is to realize an important observation introduced in~\cite{amitSODA} from classical to quantum. This observation is that Shannon entropy is the expected value of a function  of $r_q$, where $r_q$ denotes the remaining repetitions of the element at position $q \in [m]$, namely, the number of times it appears in the remainder of the stream. This approach originates from the seminal work \cite{alon1996space}. 
 
For a data stream $A = \langle x_1, \ldots, x_m\rangle$, a random variable is defined based on a random suffix of the stream. Specifically, a position $q$ is selected uniformly at random from the set $\{1, \dots, m\}$. Let $r_q$ denote the number of occurrences of the element $x_q$ from position $q$ to the end of the stream, that is, 
\[ r_q = \bigl|\{\, j \in \{q, \ldots, m\} : x_j = x_q \,\}\bigr|. \] 
Define the  random variable 
\[ X(r_q) = \lambda_m(r_q)-\lambda_m(r_q-1),
\]
where $\lambda_m(r_q)=r_q \log \frac{m}{r_q}$, and $\lambda_m(0)=0$. For notational simplicity, $X(r_q)$ will hereafter be denoted by $X_q$. This construction satisfies $\mathbb{E}[X_q]=H(p)$. Therefore, estimating the Shannon entropy reduces to estimating the expectation of $X_q$. To estimate the expectation $\mathbb{E}[X_q]$, \cite{amitSODA} introduced the following classical estimation procedure.

\begin{lemma}[Lemma 2.2 in Ref.\cite{amitSODA}]\label{classcicalLemma}
    Let $X_q$ be a bounded random variable such that $-a \leq X_q \leq b$ with $a, b \geq 0$. Then there exists an algorithm that uses $O\left(\frac{\log(1/\delta)(a+b)(\mathbb{E}[X_q]+a)}{\varepsilon^{2}\mathbb{E}[X_q]^2}\right)$ independent samples of $X_q$ and outputs an $(\varepsilon,\delta)$-approximation of $\mathbb{E}[X_q]$.  The total space complexity is obtained by multiplying the sample complexity by the space required to store and process each sample.
\end{lemma}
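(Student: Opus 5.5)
The plan is the standard median-of-means argument, with the boundedness used to control the variance. Write $\mu=\mathbb{E}[X_q]$ and assume $\mu>0$, which is the situation in which a multiplicative approximation makes sense. Since $-a\le X_q\le b$, the variance satisfies
\[
\mathrm{Var}(X_q)\le \mathbb{E}\bigl[(X_q+a)^2\bigr]\le (a+b)\,\mathbb{E}[X_q+a]=(a+b)(\mu+a).
\]
The first inequality holds because the variance is the minimum of $\mathbb{E}[(X_q-c)^2]$ over $c$; we take $c=-a$. The second holds because $0\le X_q+a\le a+b$. This bound is the only place the two-sided boundedness enters, and it explains the factor $(a+b)(\mu+a)$ in the statement.

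\textbf{Stage 1 (averaging).} Take $s=\lceil 8(a+b)(\mu+a)/(\varepsilon^2\mu^2)\rceil$ independent samples and let $Y$ be their average. By Chebyshev,
\[
\Pr\bigl[|Y-\mu|>\varepsilon\mu\bigr]\le \frac{(a+b)(\mu+a)}{s\,\varepsilon^2\mu^2}\le \frac18 .
\]

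\textbf{Stage 2 (median).} Repeat Stage 1 independently $t=O(\log(1/\delta))$ times, obtaining $Y_1,\dots,Y_t$, and output their median. The median lies outside $[(1-\varepsilon)\mu,(1+\varepsilon)\mu]$ only if at least $t/2$ of the $Y_i$ do. A Chernoff--Hoeffding bound on the number of failures, which is a sum of independent indicators with mean at most $t/8$, bounds this probability by $e^{-\Omega(t)}\le\delta$.

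The total number of samples is $st=O\!\left(\frac{\log(1/\delta)(a+b)(\mu+a)}{\varepsilon^2\mu^2}\right)$, as claimed. Each sample is generated, used, and then discarded, and only the $t$ running averages are stored. So the space is the sample count times the per-sample cost, which is the last sentence of the statement, and in fact the bound obtained this way is better than that.

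The main subtlety is that $s$ depends on the unknown $\mu$. If the algorithm must be fully explicit, I would run it with a known lower bound $\mu_0\le\mu$ in place of $\mu$. The needed $s$ is monotone decreasing in $\mu$, so substituting $\mu_0$ only strengthens the Chebyshev bound. For entropy, such a lower bound is available, for instance from a separate constant-factor estimate or a lower bound on $H(p)$ in the non-degenerate case. Everything else in the argument is routine.
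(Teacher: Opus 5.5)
Your proof is correct, but it takes a different route from the argument the paper relies on. The variance bound $\mathrm{Var}(X_q)\le\mathbb{E}[(X_q+a)^2]\le(a+b)(\mu+a)$ holds, and Chebyshev on group means followed by a median over $O(\log(1/\delta))$ groups gives the stated count. The ceiling in $s$ is harmless, since $(a+b)(\mu+a)\ge b\mu\ge\mu^2$ makes the unrounded value at least $8/\varepsilon^2$.

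The paper gives no proof of its own, only the citation. The intended route is visible in the proof of Lemma~\ref{Expectation}, which is its quantum transcription:
\begin{itemize}
\item rescale to $Y_q=(X_q+a)/(a+b)\in[0,1]$, with mean $\nu=(\mu+a)/(a+b)$;
\item estimate $\nu$ to relative accuracy $\tilde\varepsilon=\varepsilon\mu/(\mu+a)$;
\item return $(a+b)\tilde\nu-a$.
\end{itemize}
Classically, the relative estimate of $\nu$ comes from a multiplicative Chernoff bound on the plain sample mean. That needs $O(\log(1/\delta)/(\tilde\varepsilon^{2}\nu))$ samples, and substituting $\tilde\varepsilon$ and $\nu$ gives exactly $(a+b)(\mu+a)/(\varepsilon^2\mu^2)$.

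Both arguments rest on the same inequality: your variance bound is just $\mathrm{Var}(Y_q)\le\mathbb{E}[Y_q]$ scaled by $(a+b)^2$. They differ in what each buys.
\begin{itemize}
\item The Chernoff route outputs a single empirical mean with explicit constants. It also maps term by term onto the amplitude-estimation error $\sqrt{\nu}/t+1/t^2$ used in Lemma~\ref{Expectation}.
\item Your median-of-means route uses only the second-moment bound. It would therefore also cover unbounded variables satisfying the same variance bound, at the price of the median step.
\end{itemize}
Your handling of the unknown $\mu$ through a known lower bound is exactly what the paper does, using $\mathbb{E}[X_q]\ge 1$.

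One caveat on your space remark. Generating and discarding samples one at a time presumes they can be drawn sequentially. In the one-pass streaming use, all copies must be maintained in parallel during the single pass. That is why the statement multiplies the sample count by the per-sample space, so your claim that the bound is ``better than that'' does not apply in that setting.
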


Another important observation is that the entropy is influenced by the frequency of the majority element, which can be seen directly from the entropy structure.
Let $p_x$ denote the probability associated with an arbitrary element $x$. From the definition of the Shannon entropy, it follows that the entropy can be decomposed as
\[
H(p)
=
p_{x}\log\frac{1}{p_{x}}
+
\sum_{i\ne x} p_i \log \frac{1}{p_i}.
\]
As $p_x\rightarrow 1$, the first term vanishes since $\log(1/p_{x}) \to 0$, while the second term is suppressed by the remaining $1-p_{x}$. Consequently, a large frequency of the majority element $m_x$ leads to a small entropy and, in turn, increases the complexity implied by Lemma \ref{classcicalLemma}.

To address this problem, Refs. \cite{amitSODA,Stacs06} partition the analysis into two cases according to whether a majority element exists: $m_x\le m/2$ and $m_x>m/2$. When no majority element exists ($m_x\le m/2$), the random variable $X_q$ satisfies $H(p)= \mathbb{E}[X_q]$, which can be estimated using Lemma \ref{classcicalLemma}. When $m_x>m/2$, the algorithm removes the majority element $x$ and defines a new random variable $X'_q$ over the restricted substream $A\backslash \{x\}$ in the same manner as $X_q$. The expectation of $X'_q$, together with the contribution of the majority element, yields $H(p)=\frac{m-m_x}{m}\mathbb{E}[X'_q]+\frac{m_x}{m}\log\frac{m}{m_x}$, where $\mathbb{E}[X'_q]$ is estimated using Lemma \ref{classcicalLemma}.

\subsubsection{Quantum query algorithm with an implementable oracle}
Inspired by the classical estimation procedure established in Lemma \ref{classcicalLemma}, we present a quantum query algorithm for estimating the Shannon entropy in the streaming setting. Suppose that we are given access to a quantum oracle that outputs $X_q$. This is formalized as follows. 
\begin{definition}[Quantum Oracles]\label{oracle}
Given a position $q \in \{1, \dots, m\}$, there exists a quantum oracle $O$ such that
\[
|q\rangle |z\rangle \xrightarrow{O} |q\rangle |z+X_q\rangle
\]
acting on $O(\log m + \log n)$ qubits of space, corresponding to the encoding of the stream position and the associated estimator value.
\end{definition}
With access to this oracle, we can now design a quantum algorithm to estimate $\mathbb{E}[X_q]$ efficiently, as stated in Lemma \ref{Expectation}.

\begin{lemma}[Quantum query algorithm.]\label{Expectation}
Let $X_q$ be a random variable satisfying $-a \le X_q \le b$ for some $a,b \ge 0$, and assume oracle access to $X_q$ as given in Definition \ref{oracle}. Then there exists a quantum query algorithm that outputs an \((\varepsilon,\delta)\)-{approximation} to $\mathbb{E}[X_q]$
using
\[
O\!\left(
\frac{\log(1/\delta) \sqrt{(a+b)(\mathbb{E}[X_q]+a)}}{\varepsilon \mathbb{E}[X_q]}
\right)
\]
queries, and requiring
\[
O\!\left(\log m + \log n + \log \frac{a+b}{\varepsilon \mathbb{E}[X_q]}\right)
\]
qubits of space.
\end{lemma}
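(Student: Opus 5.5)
We prove Lemma \ref{Expectation} by quantizing Lemma \ref{classcicalLemma} with the quantum Monte Carlo framework of \cite{montanaro2015quantum}, which is built on amplitude estimation \cite{brassard2002quantum}.

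\textbf{Step 1: reduce to a $[0,1]$-valued variable.} Shift and rescale the variable,
\[
Y_q := \frac{X_q + a}{a+b} \in [0,1], \qquad \mu := \mathbb{E}[Y_q] = \frac{\mathbb{E}[X_q]+a}{a+b}.
\]
An additive error of $\eta$ in $\mu$ corresponds to an additive error of $(a+b)\eta$ in $\mathbb{E}[X_q]$. We want additive error $\varepsilon\mathbb{E}[X_q]$ in $\mathbb{E}[X_q]$, so it suffices to estimate $\mu$ to additive error $\eta := \varepsilon\mathbb{E}[X_q]/(a+b)$.

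\textbf{Step 2: prepare the amplitude-encoding state.} First apply a Hadamard-type preparation to get the uniform superposition $\frac{1}{\sqrt m}\sum_q |q\rangle$; this uses $O(\log m)$ qubits. For $m$ not a power of two, use exact or amplified uniform-state preparation. Next, call $O$ once to write $X_q$ into a register of $O(\log n+\log m)$ qubits. Reversible arithmetic then computes $Y_q$ to $O(\log\frac{a+b}{\varepsilon\mathbb{E}[X_q]})$ bits of precision. A controlled rotation on one ancilla produces
\[
\frac{1}{\sqrt m}\sum_q |q\rangle|Y_q\rangle\bigl(\sqrt{1-Y_q}\,|0\rangle+\sqrt{Y_q}\,|1\rangle\bigr).
\]
Finally, uncompute with a second call to $O$. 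The probability of measuring $|1\rangle$ is exactly $\mu$, up to truncation error. The truncation is absorbed by choosing the precision as $\Theta(\eta)$, which accounts for the third term in the space bound.

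\textbf{Step 3: amplitude estimation with variance-sensitive accuracy.} Amplitude estimation with $k$ iterations returns $\tilde\mu$ with
\[
|\tilde\mu-\mu| \le 2\pi\frac{\sqrt{\mu(1-\mu)}}{k}+\frac{\pi^2}{k^2}
\]
with probability at least $8/\pi^2$. Take
\[
k = \Theta\!\left(\sqrt{\mu}/\eta + 1/\sqrt{\eta}\right).
\]
This makes the error at most $\eta$. Substituting $\mu$ and $\eta$ gives
\[
\frac{\sqrt{\mu}}{\eta} = \frac{\sqrt{(a+b)(\mathbb{E}[X_q]+a)}}{\varepsilon\mathbb{E}[X_q]}.
\]
The term $1/\sqrt{\eta}$ is dominated by this, since
\[
\frac{\mu}{\eta} = \frac{\mathbb{E}[X_q]+a}{\varepsilon\mathbb{E}[X_q]} \ge \frac{1}{\varepsilon} \ge 1.
\]
Estimation is done via phase estimation on the Grover iterate. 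Its counting register has $O(\log k)$ qubits, and $O(\log k) = O(\log\frac{a+b}{\varepsilon\mathbb{E}[X_q]})$ because $k \le O(1/\eta)$, using $\sqrt{\mu} \le 1$. Boost the success probability to $1-\delta$ by running $O(\log(1/\delta))$ independent repetitions and taking the median. The repetitions run sequentially and reuse the same qubits, so classical storage is only $O(\log(1/\delta))$ estimates.

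\textbf{Step 4: convert back.} Output $(a+b)\tilde\mu-a$. Its additive error is at most $\varepsilon\mathbb{E}[X_q]$, which is the required multiplicative approximation.

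The main obstacle is a circularity: $k$ and the precision both depend on the unknown $\mathbb{E}[X_q]$. To handle it, run amplitude estimation with doubling $k = 2^j$. Stop once $k$ exceeds $C\sqrt{\tilde\mu}/\eta(\tilde\mu)$, where $\eta(\tilde\mu)$ is computed from the current estimate. Equivalently, invoke Montanaro's relative-error estimator \cite{montanaro2015quantum}, whose total cost is a geometric sum dominated by the final round. This preserves both the query bound and the logarithmic space bound.
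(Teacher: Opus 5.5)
Your proposal is correct and follows essentially the same route as the paper. Both rescale to $Y_q=(X_q+a)/(a+b)\in[0,1]$, encode $\mathbb{E}[Y_q]$ as an amplitude using the oracle, reversible arithmetic and a controlled rotation, and run amplitude estimation with the $\sqrt{\mu}/k+1/k^2$ error bound at $k\approx\sqrt{\mu}/\eta$. Your additive target $\eta=\varepsilon\mathbb{E}[X_q]/(a+b)$ is exactly the paper's relative target $\tilde\varepsilon=\varepsilon\mathbb{E}[X_q]/(\mathbb{E}[X_q]+a)$ for $\nu$, and both then boost by medians and count registers the same way.

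Your extra steps are refinements the paper omits: checking that the $1/k^2$ term is dominated, bounding the counting register by $\log(1/\eta)$, and handling the dependence of $k$ on the unknown $\mathbb{E}[X_q]$ by doubling or relative-error estimation. The paper simply plugs $\mathbb{E}[X_q]$ into $t$, which is harmless in its application because $\mathbb{E}[X_q]\ge 1$ there. Conversely, the paper allows an arbitrary input superposition $\sum_q\sqrt{p_q}|q\rangle$ rather than your uniform one, which is what lets it reuse the lemma with QIPC in Case 2.
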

\begin{proof}
The quantum query algorithm is obtained by applying quantum Monte Carlo methods~\cite{montanaro2015quantum} to the entropy estimator introduced in the classical streaming algorithm~\cite{amitSODA}.
	
	We first define a new random variable $Y_q=(X_q+a)/(a+b)\in[0,1]$, and let $\nu=\mathbb{E}[Y_q]=(H(p)+a)/(a+b)$. A relative estimation of $\nu$ allows the recovery of the entropy $H(p)$. The procedure proceeds as follows. Assume that the superposition state
    \begin{equation}
        \sum_{q\in\mathcal{Q}}\sqrt{p_q}|q\rangle
    \end{equation}
    associated with a distribution $\{p_q\}_{q\in\mathcal{Q}}$ is provided in advance, where $\mathcal{Q}$ represents the set of all possible values of $q$. To coherently encode the random variable $X_q$, we first apply the oracle $O$ to the state $\sum_{q\in\mathcal{Q}}\sqrt{p_q}|q\rangle|0\rangle$, obtaining
	\begin{equation}
		O\sum_{q\in\mathcal{Q}}\sqrt{p_q}|q\rangle|0\rangle=\sum_{q\in\mathcal{Q}}\sqrt{p_q}|q\rangle|X_q\rangle,
	\end{equation}
	where $p_q$ denotes the probability associated with position $q$. Next, we apply quantum arithmetic to transform \(X_q\in[-a,b]\) into \(Y_q\in[0,1]\), obtaining
    \begin{equation}
        \sum_{q\in\mathcal{Q}}\sqrt{p_q}|q\rangle|X_q\rangle|Y_q\rangle.
	\end{equation}
    We further apply a controlled $R_y$ rotation conditioned on $|Y_q\rangle$, $R_y(2\arcsin{\sqrt{Y_q}})$, yielding the state 
	\begin{equation}
		\sum_{q\in\mathcal{Q}}\sqrt{p_q}|q\rangle|X_q\rangle|Y_q\rangle\left(\sqrt{Y_q}|1\rangle+\sqrt{1-Y_q}|0\rangle\right).
	\end{equation}
    This defines a unitary operator $U$ satisfying
	\begin{equation}
		U|0\rangle=\sum_{q\in\mathcal{Q}}\sqrt{p_q\cdot Y_q}|q\rangle|X_q\rangle|Y_q\rangle|1\rangle+\sum_{q\in\mathcal{Q}}\sqrt{p_q\cdot (1-Y_q)}|q\rangle|X_q\rangle|Y_q\rangle|0\rangle,
	\end{equation}
	where the probability of measuring ``1'' in the last qubit is
	\begin{equation}
		\Pr\left(1\right)=\sum_{q\in\mathcal{Q}}p_q\cdot Y_q=\mathbb{E}[Y_q]=\nu.
	\end{equation}
    Given the unitary \(U\), the standard amplitude estimation algorithm \cite{brassard2002quantum} provides an estimate \(\tilde{\nu}\) satisfying
\begin{equation}\label{amp_est-err}
    |\tilde{\nu}-\nu|\le C\left(\frac{\sqrt{\nu}}{t}+\frac{1}{t^2}\right),
\end{equation}
where \(t\) is the accuracy parameter and \(C\) is a universal constant. The estimate succeeds with probability at least \(8/\pi^2\), which can be boosted to \(1-\delta\) by \(O(\log(1/\delta))\) repetitions using standard median amplification. Note that, by the definition of $\nu$, estimating $\nu$ allows us to recover the entropy via
    \begin{equation}
        H(p)=(a+b)\nu-a.
    \end{equation}
    We next show how to obtain an $(\tilde{\varepsilon},\delta)$-approximation of $\nu$, which further leads to an $(\varepsilon,\delta)$-approximation of $H(p)$.

To achieve the relative error bound $|\tilde{\nu}-\nu|\le\tilde{\varepsilon}\nu$, Eq.(\ref{amp_est-err}) implies that it suffices to take
\[
t=O\left(
\frac{1}{\tilde{\varepsilon}\sqrt{\nu}}
\right).
\]
We next determine the relationship between \(\tilde{\varepsilon}\) and \(\varepsilon\) so that an \((\tilde{\varepsilon},\delta)\)-approximation of \(\nu\) directly yields an \((\varepsilon,\delta)\)-approximation of \(H(p)\). Define the entropy estimator as 
\[
\tilde{H}(p):=(a+b)\tilde{\nu}-a.
\]
Then,
\begin{equation}
    |\tilde{H}(p)-H(p)|=(a+b)|\tilde{\nu}-\nu|\le (a+b)\tilde{\varepsilon}\nu.
\end{equation}
To ensure $|\tilde{H}(p)-H(p)|\le \varepsilon H(p)$, it suffices to choose
\begin{equation}
    \tilde{\varepsilon}=\frac{\varepsilon\cdot H(p)}{(a+b)\nu}=\frac{\varepsilon\cdot H(p)}{H(p)+a},
\end{equation}
where $\nu=(H(p)+a)/(a+b)$. Then, with $\mathbb{E}[X_q]=H(p)$, and substituting this into the expression for $t$ yields
\begin{equation} 
t=O\Bigg(\frac{\sqrt{(a+b)(\mathbb{E}[X_q]+a)}}{\varepsilon\mathbb{E}[X_q]}\Bigg). \end{equation}
Consequently, an \(\left(\frac{\varepsilon\cdot \mathbb{E}[X_q]}{\mathbb{E}[X_q]+a},\delta\right)\)-approximation of \(\nu\) induces an \((\varepsilon,\delta)\)-approximation of \(H(p)\):
\begin{equation} |\tilde{H}(p)-H(p)|=(a+b)|\tilde\nu-\nu|\le(a+b)\tilde{\varepsilon}\nu= \varepsilon H(p). 
\end{equation}

Finally, we analyze the resource requirements of the algorithm. 

\textbf{Space and query complexity analysis.} We first consider the space complexity of state preparation. Preparing the state $\sum_{q\in\mathcal{Q}}\sqrt{p_q}|q\rangle$ requires $O(\log m)$ qubits. Note that we only consider the space required to represent $\sum_{q\in\mathcal{Q}}\sqrt{p_q}|q\rangle$ here; the auxiliary workspace required for its preparation will be analyzed later when the corresponding algorithm is invoked. Applying the oracle \(O\) to obtain $\sum_{q\in\mathcal{Q}}\sqrt{p_q}|q\rangle|X_q\rangle$ introduces an additional workspace of $O(\log m+\log n)$ qubits, as established in Lemma \ref{thm:QS-RVEE}. The quantum arithmetic operation computing \(Y_q=(a+X_q)/(a+b)\) requires \(O(\log(a+b))\) qubits for fixed-point representation, where the values of $a=\log e$ and $b=\log m$ are fixed according to \cite{amitSODA}. In addition, the controlled rotation $R_y(2\arcsin{\sqrt{Y_q}})$ uses one ancilla qubit. The amplitude estimation procedure introduces an additional register of size \(O(\log t)\), where $t = O\Bigg(\frac{\sqrt{(a+b)(\mathbb{E}[X_q]+a)}}{\varepsilon\mathbb{E}[X_q]}\Bigg)$.
Therefore, the overall space complexity is
\begin{equation}
O\left(
\log m+\log n+\log \frac{a+b}{\varepsilon\mathbb{E}[X_q]}\right).
\end{equation}

We now consider the query complexity of the algorithm. The standard amplitude estimation algorithm outputs \(\tilde{\nu}\) with probability \(8/\pi^2\) using \(t\) oracle calls. To amplify the success probability to \(1-\delta\), the procedure is repeated \(O(\log(1/\delta))\) times. Therefore, the total number of oracle \(O\) calls is
\begin{equation}
O\!\left(
t\log\frac{1}{\delta}
\right)
=O\left(
\frac{\log(1/\delta) \sqrt{(a+b)(\mathbb{E}[X_q]+a)}}{\varepsilon \mathbb{E}[X_q]}
\right).
\end{equation}
This completes the proof of Lemma \ref{Expectation}.
\end{proof}
Having established the query algorithm, the next step is to convert it into a quantum streaming algorithm. This requires two key considerations: first, constructing the oracle in the streaming setting, and second, ensuring that both the space complexity and the number of passes over the stream meet the desired quantum bounds, thereby exceeding classical methods. 
The construction of the oracle is addressed in Lemma \ref{thm:QS-RVEE}.
\begin{lemma}[Oracles $O$]\label{thm:QS-RVEE}
	For any $q \in [m]$ with each $x_q \in [n]$ in the stream $A$, there exists a quantum circuit to implementing $|q\rangle |z\rangle \xrightarrow{O} |q\rangle |z+X_q\rangle$, using $O(\log m + \log n)$ qubits of space in two passes over the stream. 
\end{lemma}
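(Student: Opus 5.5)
The oracle will be built as a reversible streaming computation that runs coherently on the position register $|q\rangle$. Its output $X_q$ depends only on $r_q$ and $m$, through $X_q=\lambda_m(r_q)-\lambda_m(r_q-1)$. So it is enough to implement $|q\rangle|0\rangle\mapsto|q\rangle|r_q\rangle$ in a streaming fashion, compute $X_q$ from $r_q$ with a reversible arithmetic circuit, and then uncompute the intermediate registers. Note that $m$ must be known to evaluate $\lambda_m$. I would either assume $m$ is given, or compute it classically with an $O(\log m)$-bit counter during the first pass.

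\textbf{First pass: compute $x_q$ and $r_q$ coherently.} I would introduce three registers, all initialised to $|0\rangle$:
\begin{itemize}
\item a $\lceil\log n\rceil$-qubit register $R_1$ to hold $x_q$;
\item a one-qubit flag $F$;
\item a $\lceil\log m\rceil$-qubit counter $R_2$.
\end{itemize}
A classical counter $j$ tracks the index of the arriving element. When $x_j$ arrives, the algorithm applies the unitary $U_j(x_j)$, which consists of three controlled steps:
\begin{enumerate}
\item Controlled on $q=j$, it XORs $x_j$ into $R_1$ and flips $F$. This is a comparator between $|q\rangle$ and the classical value $j$, followed by controlled-NOTs with classical targets.
\item Controlled on $q<j$, it does the same comparison, and then tests $R_1=x_j$.
\item Controlled on $F=1$ and $R_1=x_j$ (which covers both $j=q$ and later occurrences), it increments $R_2$.
\end{enumerate}
To keep the ordering consistent, I would first perform the write at $j=q$ and then the count step, so that position $q$ itself is counted. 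After the pass the state is $|q\rangle|x_q\rangle|1\rangle|r_q\rangle$. Every operation is a classically controlled reversible gate whose ancillas are uncomputed immediately. The comparator needs $O(\log m)$ scratch qubits and the equality test needs $O(\log n)$, so the workspace stays at $O(\log m+\log n)$.

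\textbf{Computing $X_q$ and uncomputing.} Next, a reversible fixed-point circuit computes $\lambda_m(r_q)-\lambda_m(r_q-1)$ from $r_q$ and adds it into $|z\rangle$. The precision should be polylogarithmic or $O(\log m)$ bits, which is consistent with how the paper accounts for arithmetic space. Because this step only reads $R_2$, afterwards I would uncompute $R_1,F,R_2$. This is done with a \emph{second pass} that applies the inverses $U_j(x_j)^{-1}$. The inverse of a streaming transformation needs the stream in reverse order, but every step only changes $R_2$ by a controlled increment, and these increments commute given fixed $R_1,F$. So I would do the decrements in forward order while $R_1,F$ are still set, which returns $R_2$ to $0$. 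Then, in the same pass, at $j=q$ I would erase $R_1$ and $F$ by XORing $x_j$ again. The point to check is that erasing $R_1$ at $j=q$ cannot interfere with decrements that happen later in the pass. The fix is to run the second pass as two interleaved steps per element: at each $j\ge q$ decrement when $R_1=x_j$, and at $j=q$ decrement first and only then clear. Clearing $R_1$ at $j=q$ would stop later decrements, so instead I would reorder: the second pass decrements for all $j>q$ and leaves $R_1$ alone, and the $j=q$ contribution (exactly one count) together with the clearing of $R_1,F$ is handled deterministically once the pass ends. That last step is still impossible, since clearing $R_1$ requires knowing $x_q$.

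\textbf{Main obstacle.} The hard part is garbage removal within the two-pass budget. If the ordering argument above fails, I would fall back on the simplest correct structure. Pass one computes $x_q$ and $r_q$; then $X_q$ is computed into $|z\rangle$. Pass two undoes the counter: it decrements $R_2$ at every $j\ge q$ with $x_j=R_1$ and clears $R_1$ at $j=q$ only \emph{after} performing that position's decrement. Under this ordering, positions $j>q$ arrive after the clear, so they would be missed. To handle this I would also check directly that the remaining counter value equals $r_q-1$, a fixed function of the registers. It is the circuit's cleanness that must be verified, and I would settle this by explicitly tracking the register state for a single basis $|q\rangle$ through both passes. Since everything acts on the computational basis, correctness for each basis state then extends by linearity to superpositions over $q$, which is exactly what Lemma~\ref{Expectation} requires.
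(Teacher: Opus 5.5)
\textbf{There is a genuine gap: the proposal never produces a clean second pass.} Your first pass and the arithmetic step are fine, and they are more explicit than the paper about where $x_q$ is kept. But the uncomputation is essentially the whole content of the lemma. After the arithmetic the state is $|q\rangle|x_q\rangle_{R_1}|1\rangle_F|r_q\rangle_{R_2}|z+X_q\rangle$. Decrementing $R_2$ in forward order needs $R_1=x_q$ at every $j>q$. Yet the only place the stream lets you XOR $R_1$ away is $j=q$. Every ordering you try runs into this, as you observe yourself. Your fallback clears $R_1$ at $j=q$ after that position's decrement, so it misses all later occurrences and ends in $|q\rangle|0\rangle|0\rangle|r_q-1\rangle|z+X_q\rangle$. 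The leftover $|r_q-1\rangle$ is stream-dependent garbage entangled with $|q\rangle$. It would destroy the interference that the reflections in the amplitude estimation of Lemma~\ref{Expectation} rely on. Calling it ``a fixed function of the registers'' does not erase it. You would need an explicit stream-independent reversible circuit, and for arbitrary $z$ the surviving registers $|q\rangle$ and $|z+X_q\rangle$ do not determine $r_q$. Tracking a basis state through the passes can only verify a construction; it cannot supply the missing one.

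The paper's proof does not settle this point either. It treats $x_q$ as available control information for $g_q$ in both passes, and justifies the forward-order second pass only by the commutativity of the controlled decrements. It never says where $x_q$ is held or how it is erased. There are two ways to close the gap.

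(a) Take $z=0$. Lemma~\ref{Expectation} only applies $O$ to $|q\rangle|0\rangle$ and $O^{-1}$ to $|q\rangle|X_q\rangle$, so this suffices. With $\phi(r)=r\log r$ and $r<m$, we have $X(r)-X(r+1)=\phi(r+1)-2\phi(r)+\phi(r-1)\ge\log e/(r+1)\ge\log e/m$. Hence, with $\log m+O(1)$ fractional bits, the stored $X_q$ determines $r_q$. The construction is then:
\begin{itemize}
\item Right after the arithmetic, clear $R_2$ without touching the stream, e.g.\ by looping classically over $r\in[m]$ and XORing $r$ into $R_2$ when the recomputed $X(r)$ matches.
\item Reset $F$ with a NOT gate, since it equals $1$ for every $q$ after pass one.
\item In pass two, simply XOR $x_j$ into $R_1$ at $j=q$.
\end{itemize}
This gives two passes and $O(\log m+\log n)$ qubits. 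The mirror-image schedule handles $O^{-1}$: first recompute $R_2$ from $X_q$ and clear $X_q$, then load $R_1$ and decrement in one pass, then unload $R_1$ in the next.

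(b) For arbitrary $z$, keep $R_1$ and run a full decrement pass, controlled on the comparator $j\ge q$ rather than on $F$. Then erase $R_1$ at $j=q$ in a third pass. This changes the pass count of Theorem~\ref{main result} only by a constant factor, but it is three passes, not two.
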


 \begin{proof}
To prove this lemma, we first present the quantum streaming algorithm for constructing the oracle $O$. The construction of oracle is based on the fundamental principle that classical computational circuits can be efficiently transformed into quantum circuits \cite{bennett1973logical,toffoli1980reversible,siamcompBennett89,siamcompLevinS90}. This principle was first exploited in the quantum streaming model by  Ref.~\cite{montanaro2016quantum}. We then analyze the cost of the oracle, including its space complexity and the number of passes over the data stream.
\begin{algorithm}[h]
	\caption{Quantum streaming algorithm for constructing the oracle $O$}
	\label{Algorithm1}
	\begin{algorithmic}[1]
		\Statex \textbf{Input:} data stream $A=\langle x_1,x_2,\dots,x_m\rangle$ with $x_j\in [n]$, $q\in [m]$, and $z\in\mathbb{R}$
		\State Initialize the state $|q\rangle|0\rangle|z\rangle$.
		\Statex \textbf{First pass:}
		\State Set $r_q\gets 0$;
		\Statex \For{each update $x_j$ from $x_1$ to $x_m$}
		{
			\Statex $U_{x_j} \colon | q \rangle | r_q \rangle \mapsto | q \rangle \big| r_q + g_q (x_j) \big\rangle$
			\Statex $r_q\gets r_q +g_q (x_j)$
		}
		\EndFor
		\State Prepare the state $|q\rangle|r_q\rangle|z+X_q\rangle$, where $X_q=r_q\log\frac{m}{r_q}-(r_q-1)\log\frac{m}{(r_q-1)}$ is computed from $r_q$ and $m$ using quantum arithmetic operations.
		\Statex \textbf{Second pass:}
		\State Uncompute the register $|r_q\rangle$, where \Statex\For{each update $x_j$ from $x_1$ to $x_m$}{\Statex $U_{x_j}^{-1} \colon | q \rangle | r_q \rangle \mapsto | q \rangle \big| r_q - g_q (x_j) \big\rangle$
			\Statex $r_q\gets r_q - g_q (x_j)$.}
		\EndFor
		\Statex \textbf{Output:} $|q\rangle|z+X_q\rangle$.
	\end{algorithmic} 
\end{algorithm} 

	\textbf{Quantum streaming algorithm for constructing the oracle $O$.} The proposed algorithm, Algorithm \ref{Algorithm1}, implements a quantum streaming procedure to construct the oracle $O$, which outputs $X_q$ for a given position $q$ after two passes over the data stream. The circuit realization is depicted in Fig. \ref{Fig1}.
	\begin{figure}[h]
		\centering
		\includegraphics[width=0.56\textwidth]{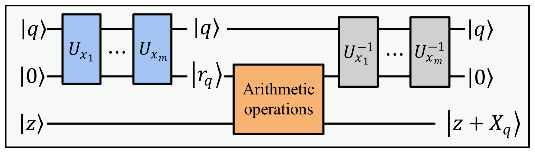}
		\caption{The circuit for implementing the oracle $O$. Here, each unitary operator $U_{x_j}$ and its inverse $U_{x_j}^{-1}$, for $j\in[m]$, are induced by the elements of the data stream. Arithmetic operations can be viewed as a sequence of quantum operators that implement the function $X_q=r_q\log \frac{m}{r_q}-(r_q-1)\log \frac{m}{(r_q-1)}$.}
		\label{Fig1}
	\end{figure}
	Let $x_q$ be the element at position $q$ and let $\mathcal{G}$, $|\mathcal{G}|=m$, be a family of functions such that for each reversibly implemented mapping $g:(q, x_j)\mapsto g_q(x_j)$, the function $g_q (x_j)$ is 1 if $j\ge q$ and $x_j=x_q$, and 0 otherwise, using space $\log m+\log n$, as derived in the subsequent complexity analysis. For each upcoming element $x_j$, the map
	\begin{equation}
		U_{x_j} \colon |q\rangle|y\rangle \mapsto | q \rangle \big|y+g_q (x_j)\big\rangle
	\end{equation}
	can be implemented using the function $g_q(\cdot)$. After the entire stream has been processed, we sequentially apply the mappings $U_{x_j}$ to obtain
	\begin{equation}\label{map1}
		U_{x_m} \cdots U_{x_1} |q\rangle|0\rangle \mapsto |q\rangle\left|\sum_{j\ge q}g_q (x_j)\right\rangle,
	\end{equation}
	where $\sum_{j\ge q}g_q (x_j)$ is the number of occurrences of element $x_q$ in the sub-stream $\langle x_q,\dots,x_m\rangle$, denoted as $r_q$. From $r_q$, one can construct the state
	\begin{equation}
		|q\rangle|r_q\rangle|z+X_q\rangle,
	\end{equation}
	where $z$ is an arbitrarily initialized real number, and $X_q$ is computed using quantum arithmetic operations. To decouple ancillary registers, we uncompute the second register by re-reading the stream and applying a series of maps defined as:
	\begin{equation}\label{map2}
		U_{x_j}^{-1}:|q\rangle|y\rangle\mapsto|q\rangle\left|y-g_q(x_j)\right\rangle.
	\end{equation}
	Once the second pass over the stream is completed, we sequentially apply the maps $U_{x_j}^{-1}$ to obtain
	\begin{equation}
		U_{x_m}^{-1} \cdots U_{x_1}^{-1} |q\rangle|r_q\rangle|z+X_q)\rangle \mapsto |q\rangle\left|0\right\rangle|z+X_q\rangle.
	\end{equation}
	The inverse computation here uses $U_{x_m}^{-1} \cdots U_{x_1}^{-1}$ instead of $U_{x_1}^{-1} \cdots U_{x_m}^{-1}$ to ensure that the second pass of the data stream loads the stream elements in the same order as the first pass. Note that its implementation is possible because the operations induced by any permutation of the suffix $\langle x_q,\cdots, x_m\rangle$ can realize the inverse operation of $r_q=\sum_{j\ge q}g_q(x_j)$. In addition, the final state $|q\rangle|z+X_q\rangle$ encodes $X_q$ associated with an position $q$.

\textbf{Space and pass complexity analysis.} We now specify the space cost and the number of passes over the data stream required to implement the oracle. The space complexity analysis follows from describing the state evolution of Algorithm \ref{Algorithm1}. The input state is $|q\rangle|0\rangle|z\rangle$, where the register $|q\rangle$ requires $\log m$ qubits and, for simplicity, $z$ is treated as a constant. Applying the operator sequence $\prod_{j=1}^{m} U_{x_j}$ to the first two registers yields the state $|q\rangle|r_q\rangle|z\rangle$, where $r_q \in [1,m]$ and thus also requires $\log m$ qubits. Each unitary $U_{x_j}$ is implemented by a reversible circuit using $q$, $j$, $x_q$, and $x_j$ as control information. The construction consists of standard reversible subroutines, including a comparator for testing $j \ge q$, an equality-check circuit for $x_j$ and $x_q$, and a controlled update operation. These components can be realized without measurement, using $O(\log m + \log n)$ ancilla qubits. Since the ancillary workspace is reused and uncomputed after each application of $U_{x_j}$, the sequential implementation of $\prod_{j=1}^{m} U_{x_j}$ incurs no additional asymptotic space cost.

A subsequent arithmetic circuit updates the third register, producing $|q\rangle|r_q\rangle|z + X_q\rangle$. To represent the random variable $X_q$ in the last quantum register, we first determine the required precision; since for any position $q$ it holds that $X_q < m\log m$, it follows that representing $X_q$ requires at most $\log m$ qubits, and in addition the arithmetic circuit used to compute $X_q$ introduces an extra ancilla space overhead of $O(\log m)$ to ensure the desired accuracy in representing $X_q$. Hence, the total space complexity for preparing $|q\rangle|r_q\rangle|z + X_q\rangle$ is $O(\log m + \log n)$.

Finally, applying $\prod_{j=1}^{m} U_{x_j}^{-1}$ restores the second register, yielding $|q\rangle|0\rangle|z + X_q\rangle$. This inverse sequence has the same space cost as the forward one. Therefore, the overall space complexity of Algorithm \ref{Algorithm1} is $O(\log m+\log n)$ qubits.

The pass complexity analysis is straightforward. Obtaining the state $|q\rangle|0\rangle|z+X_q\rangle$ requires applying the two operator sequences $\prod_{j=1}^{m} U_{x_j}$ and $\prod_{j=1}^{m} U_{x_j}^{-1}$. Each such sequence is induced by a single pass over the data stream; therefore, each invocation of the oracle $O$ requires two passes over the stream. 

For completeness, we remark that \(O^{-1}\), as the inverse operation of $O$, can be implemented in essentially the same manner as $O$. Hence, its space and pass complexities follow immediately from those of $O$, and we omit further discussion. 

This completes the proof of Lemma \ref{thm:QS-RVEE}.
\end{proof}

\subsubsection{Two-stage quantum streaming algorithm}

Here, we design a two-stage quantum streaming algorithm that treats streams with and without a majority element separately, dividing the estimation procedure into two cases. In both cases, estimating the respective expectations $\mathbb{E}[X_q]$ and $\mathbb{E}[X'_q]$ relies on Lemma \ref{Expectation}, whose application requires constructing the distribution over stream positions for querying the oracle $O$. When $m_x\le m/2$, the position state $\frac{1}{\sqrt{m}}\sum_{q}|q\rangle$ can be prepared directly. In contrast, when $m_x>m/2$, constructing the position state $\frac{1}{\sqrt{m-m_x}}\sum_{q:x_q\ne x}|q\rangle$ is no longer straightforward. To address this issue, we present the following lemma.

\begin{lemma}[Quantum Index-Position Conversion, QIPC]\label{thm:QIPC}
	Given a data stream $A=\langle x_1,\cdots,x_m\rangle$ of length $m$, and the frequency $m_x$ of any element $x\in [n]$, there exists a one-pass quantum streaming algorithm that prepares the state $\frac{1}{\sqrt{m-m_x}}\sum_{q:x_q\ne x}|q\rangle$, using $O(\log m + \log n)$ qubits of space.
\end{lemma}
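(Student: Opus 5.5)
The idea is to prepare a uniform superposition over \emph{logical indices} and then map each index, during one pass, to the stream position it names. Set $N := m-m_x$; $N$ is known because $m$ and $m_x$ are given. Let $q_1<q_2<\dots<q_N$ be the positions with $x_{q_k}\ne x$. The map $k\mapsto q_k$ is a bijection from $[N]$ onto $\{q: x_q\ne x\}$. So it suffices to
\begin{enumerate}
\item prepare $\frac{1}{\sqrt{N}}\sum_{k=1}^N|k\rangle$ on $O(\log m)$ qubits, and
\item implement, in a single pass, a unitary sending $|k\rangle|0\rangle$ to $|0\rangle|q_k\rangle$ (up to ancillas returned to a fixed state).
\end{enumerate}

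\textbf{Step 1.} The uniform superposition over $[N]$ for a known classical integer $N$ that is not a power of two can be prepared exactly. One option is the standard recursive controlled-rotation construction on $\lceil\log N\rceil$ qubits. Another is exact amplitude amplification of a comparator $k<N$ on a uniform superposition over $[2^{\lceil\log N\rceil}]$, using a tuned final rotation. Either way only $O(\log m)$ qubits are used and no access to the stream is needed.

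\textbf{Step 2: the one-pass conversion.} We keep a counter register $|c\rangle$, initialized to $0$, and an output register $|p\rangle$, initialized to $0$, alongside $|k\rangle$. When $x_j$ arrives, we first test classically whether $x_j\ne x$; this is a classical check because $x$ is a fixed classical value. If $x_j=x$, we do nothing. If $x_j\ne x$, we increment $c$, and then, controlled on the equality $c=k$, we XOR the classical value $j$ into $|p\rangle$.

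After the pass the state is $|k\rangle|N\rangle|q_k\rangle$, because exactly one $j$, namely $j=q_k$, triggers the equality with $c=k$. The counter is now the classical constant $N$, so we reset it by subtracting $N$. What remains is to erase $|k\rangle$ given $|q_k\rangle$.

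\textbf{Step 3: erasing $|k\rangle$ (the main obstacle).} Naively, erasing $k$ would need a second pass to recount the non-$x$ elements before $q_k$, which breaks the one-pass claim. I would instead perform the erasure inside the same pass, symmetrically with the writing step. The controlled operation at the triggering step $j=q_k$ becomes a swap-like operation: it XORs $j$ into $|p\rangle$ and simultaneously subtracts the current counter value $c$, which equals $k$ at that moment, from $|k\rangle$. This leaves $|0\rangle$ in the index register, and the register is not touched again.

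The danger is that a later step could misfire once $|k\rangle$ has become $0$. This cannot happen: the test $c=k$ can then only fire at $c=0$, and $c\ge1$ at every non-$x$ step after the increment. To make this airtight, I would use a flag qubit that records ``already triggered'', and condition later steps on the flag being off. The flag is itself cleaned at the end, since after the pass it is $1$ on every branch.

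All registers hold values at most $m$, and the comparators and adders use $O(\log m+\log n)$ reusable ancillas, the $\log n$ term coming from comparing $x_j$ with $x$. This gives the stated space bound in one pass. The final state is $\frac{1}{\sqrt{N}}\sum_{q:x_q\ne x}|q\rangle$ with all ancillas returned to fixed classical values.

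The main thing to check carefully is the unitarity of each per-element operator together with the flag bookkeeping. In particular, the combined update $(k,p,\text{flag})\mapsto(0,\,j,\,1)$, applied when $c=k$, must be a reversible map on the full register space, extended to a permutation on the branches where $c\ne k$.
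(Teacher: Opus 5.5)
Your proposal is correct, but it realizes the index-to-position map by a different mechanism than the paper.

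The paper works in place on a single register. Starting from the uniform superposition over $[m-m_x]$, each arrival $x_j=x$ applies the fixed permutation $\pi_j$ (identity on $1,\dots,j-1$, shift $j,\dots,m-1$ up by one, $m\mapsto j$), and each $x_j\ne x$ does nothing. The inductive invariant is that branches below $j$ are frozen on non-$x$ positions, while the rest occupy the contiguous block $\{j,\dots,m-m_x+c\}$, where $c$ is the number of $x$'s seen so far; this block is empty at the end.

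You instead work out of place: a classical rank counter singles out the unique step $j=q_k$ at which branch $k$ fires, and at that step you write $j$ and erase $k$. The paper's route gets reversibility for free, since every step is literally a permutation of one $\lceil\log m\rceil$-qubit register, at the cost of an inductive bookkeeping argument and a cyclic-shift circuit over a range. Yours uses a second position register, but the correctness argument is more transparent (exactly one firing per branch) and needs only equality tests against classical constants.

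The reversibility issue you flag closes easily, and the flag qubit is superfluous. Apply ``controlled on $k=c$, XOR $j$ into $p$'' followed by ``controlled on $p=j$, subtract $c$ from $k$''. Both are ordinary controlled unitaries whose control register is not modified by the target. On reachable branches, $p=j$ holds only for the branch that just fired, because earlier firings wrote $q_{k'}<j$ and unfired branches have $p=0$. Likewise, $k=c\ge 1$ never holds on an erased branch, which has $k=0$.

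If you keep the flag, the extension must be the single transposition $(c,0,0)\leftrightarrow(0,j,1)$. This is valid because $(0,j,1)$ is unoccupied at that step. A blanket swap $(c,p,0)\leftrightarrow(0,p\oplus j,1)$ over all $p$ would instead move the already-fired branches $(0,q_{k'},1)$ to $(c,q_{k'}\oplus j,0)$, which is wrong.
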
  
\begin{proof}
     To prove this result, we first present the quantum subroutine QIPC. We then analyze its implementation cost, including the space complexity and the number of passes over the data stream.

\textbf{Implementation of the quantum subroutine QIPC.} To illustrate the implementation of QIPC, we first present an example to convey the intuition. We then provide a general construction of the position distribution over all stream elements except $x$.

Consider the data stream $A=\langle x,a,x,b,x\rangle$. For clarity, we describe the evolution from an easily preparable uniform superposition (rather than from $|0\rangle$), where the procedure transforms \(\frac{1}{\sqrt{2}}(|1\rangle+|2\rangle)\) into \(\frac{1}{\sqrt{2}}(|2\rangle+|4\rangle)\). We view the evolution as a sequential update of the quantum state along the stream. Each update is driven by the incoming element $x_j$: if $x_j=x$, a shift operation is applied to all unfixed positions; otherwise, one unfixed position is fixed, leaving the quantum state unchanged. This induces a deterministic state-transition process along the stream.

For the above example, the evolution can be summarized as follows:
\begin{equation}
    \frac{1}{\sqrt{2}}\left(|1\rangle+|2\rangle\right)\overset{x}{\longrightarrow}\frac{1}{\sqrt{2}}(|2\rangle+|3\rangle)\overset{a}{\longrightarrow}\frac{1}{\sqrt{2}}(\textcolor{red}{|2\rangle}+|3\rangle)\overset{x}{\longrightarrow}\frac{1}{\sqrt{2}}(\textcolor{red}{|2\rangle}+|4\rangle)\overset{b}{\longrightarrow}\frac{1}{\sqrt{2}}(\textcolor{red}{|2\rangle}+\textcolor{red}{|4\rangle})\overset{x}{\longrightarrow}\frac{1}{\sqrt{2}}(\textcolor{red}{|2\rangle}+\textcolor{red}{|4\rangle}).
\end{equation}
The key idea of QIPC is that the state is updated in an online manner, where each stream element either triggers a shift of all unfixed positions or fixes one position corresponding to a non-$x$ element.

We now move beyond the simple case of QIPC and consider the implementation of this quantum subroutine. Specifically, we prepare a quantum state $\left|\psi\right\rangle=\frac{1}{\sqrt{m-m_x}}\sum_{q:x_q\ne x}|q\rangle$ without prior knowledge of the indices $q$ satisfying $x_q\ne x$. The preparation procedure consists of three stages, as illustrated in Fig. \ref{Fig2}.
\begin{figure}[t]
	\centering
	\includegraphics[width=0.56\textwidth]{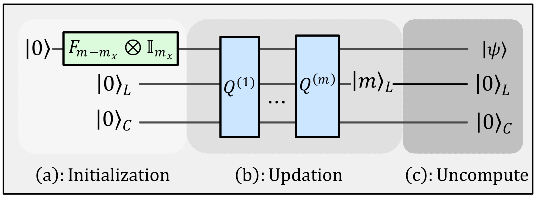}
	\caption{{The circuit for implementing the QIPC. Here, $F_{m-m_x}\otimes \mathbb{I}_{m_x}$} represents a quantum operation that prepares a uniform superposition state only in the first $m-m_x$ dimensions, while the operations from $Q_1$ to $Q_m$ represent unitary transformations induced by elements of the data stream.}
	\label{Fig2}
\end{figure}
\textbf{(a) Initialization Operation:} In the initialization stage, the quantum state is prepared as
\begin{equation}
	\left|\phi^{(0)}\right\rangle=\frac{1}{\sqrt{m-m_x}}\sum_{q'=1}^{m-m_x}|q'\rangle_I|0\rangle_L|0\rangle_{C},
\end{equation}
by applying the operator $F_{m-m_x}\otimes\mathbb{I}_{m_x}$ to the index register $I$, initialized in the state $|0\rangle$. Here, the subscript $I$ denotes an index register consisting of $\log m$ qubits, $L$ denotes a counter register storing the stream length, and $C$ denotes an auxiliary flag register.\\
\textbf{(b) Update Operation:} In this stage, we describe the evolution of the quantum state recursively. Starting from the initial state \(\left|\phi^{(0)}\right\rangle\), the system evolves under a sequence of unitary operators $\{Q^{(j)}\}_{j=1}^{m}$ induced by the data stream, where for each $j\in [m]$,
\begin{equation}
    Q^{(j)}\left|\phi^{(j-1)}\right\rangle=\left|\phi^{(j)}\right\rangle.
\end{equation}
To analyze the structure of the state $\left|\phi^{(j)}\right\rangle$ for any $j\in [m]$, we assume that
\begin{equation}
    \left|\phi^{(j-1)}\right\rangle=\frac{1}{\sqrt{m-m_x}}\left(\sum_{q'=j-\sum_{k=1}^{j-1}\mathbf{1}_{x_k = x}}^{m-m_x}|q'+\sum_{k=1}^{j-1}\mathbf{1}_{x_k = x}\rangle_I+\sum_{k=1}^{j-1}(1-\mathbf{1}_{x_k = x})|k\rangle_I\right)|j-1\rangle_L\left|0\right\rangle_C
\end{equation}
admits a tractable decomposition over the computational registers. Under this inductive hypothesis, applying $Q^{(j)}$ yields an explicit form of $\left|\phi^{(j)}\right\rangle$, preserving the same structural decomposition. Specifically, the arrival of element \(x_j\) triggers the operation \(Q^{(j)}\), the \(L\) register is incremented by one, the \(C\) register stores the indicator \(\mathbf{1}_{x_j = x}\), resulting in an intermediate state
\begin{equation}
    \frac{1}{\sqrt{m-m_x}}\left(\sum_{q'=j-\sum_{k=1}^{j-1}\mathbf{1}_{x_k = x}}^{m-m_x}|q'+\sum_{k=1}^{j-1}\mathbf{1}_{x_k = x}\rangle_I+\sum_{k=1}^{j-1}(1-\mathbf{1}_{x_k = x})|k\rangle_I\right)|j\rangle_L\left|\mathbf{1}_{x_j = x}\right\rangle_C.
\end{equation}
We then apply the operator $|0\rangle\langle 0|\otimes \mathbb{I}+|1\rangle\langle 1|\otimes U_j$ to register $I$, controlled by register $C$, and obtain the state
\begin{equation}
    \frac{1}{\sqrt{m-m_x}}\left(\sum_{q'=j+1-\sum_{k=1}^{j}\mathbf{1}_{x_k = x}}^{m-m_x}|q'+\sum_{k=1}^{j}\mathbf{1}_{x_k = x}\rangle_I+\sum_{k=1}^{j}(1-\mathbf{1}_{x_k = x})|k\rangle_I\right)|j\rangle_L\left|\mathbf{1}_{x_j = x}\right\rangle_C,
\end{equation}
where the unitary operator $U_j$ is divided into two cases: when $j=1$, it is given by
\begin{equation}
    U_1=\sum_{q=1}^{m-1}|q+1\rangle\langle q|+|1\rangle\langle m|;
\end{equation}
when $j\ge 2$, it is given by
\[
U_j = \sum_{q=1}^{m} |\pi_j(q)\rangle \langle q|,\quad\pi_j(q)=
\begin{cases}
q, & 1 \le q \le j - 1,\\[4pt]
q+1, & j \le q \le m-1,\\[4pt]
j, & q = m.
\end{cases}
\]
By uncomputing the register $C$, we obtain the state
\begin{equation}
    \left|\phi^{(j)}\right\rangle=\frac{1}{\sqrt{m-m_x}}\left(\sum_{q'=j+1-\sum_{k=1}^{j}\mathbf{1}_{x_k = x}}^{m-m_x}|q'+\sum_{k=1}^{j}\mathbf{1}_{x_k = x}\rangle_I+\sum_{k=1}^{j}(1-\mathbf{1}_{x_k = x})|k\rangle_I\right)|j\rangle_L\left|0\right\rangle_C.
\end{equation}
Consequently, by induction over $j$, the final state can be expressed as
\begin{equation}
	\begin{split}
		\left|\phi^{(m)}\right\rangle=\frac{1}{\sqrt{m-m_x}}\sum_{q\in\{k:x_k\ne x\}}\left|q\right\rangle_I|m\rangle_L|0\rangle_C.
	\end{split}
\end{equation}\\
\textbf{(c) Uncomputation operation:} This stage aims to eliminate the auxiliary registers, where register \(L\) can be further removed since \(m\) is known, yielding the target state
\begin{equation}
	\left|\psi\right\rangle=\frac{1}{\sqrt{m-m_x}}\sum_{q\in\{k:x_k\ne x\}}\left|q\right\rangle=\frac{1}{\sqrt{m-m_x}}\sum_{q:x_q\ne x}\left|q\right\rangle.
\end{equation}
At this stage, we have obtained the state $\left|\psi\right\rangle$, which encodes the distribution over the positions $q$ for which $x_q\ne x$.

\textbf{Space and pass complexity analysis.} We first analyze the space complexity. In the initialization stage, the registers $I$, $L$, and $C$ in the state $|\phi^{(0)}\rangle$ require $\log m$, $\log m$, and $O(1)$ qubits, respectively. To initialize register \(I\), the operator \(F_{m-m_x} \otimes \mathbb{I}_{m_x}\) is applied to an \(m\)-dimensional register initialized in the state \(|0\rangle\), which introduces additional ancillary space qubits. Following the construction in \cite{wiebe2015quantum}, preparing \(|\phi^{(0)}\rangle\) requires an additional \(\lceil \log(m - m_x + 1) \rceil + 2=O(\log m)\) qubits. 

In addition to this, constructing the indicator state $|\mathbf{1}_{x_j = x}\rangle$ requires additional workspace and coherent access to the data register $|x_j\rangle$ and the value $x$. We first apply a reversible bitwise XOR (implemented using CNOT gates)
\[
|x_j\rangle \mapsto |x_j\oplus x\rangle,
\]
which transforms the equality test into identifying the all-zero state. The result is then coherently encoded into a single ancillary qubit, yielding
\begin{equation}
|x_j\rangle|0\rangle\mapsto|x_j\rangle|\mathbf{1}_{x_j=x}\rangle.
\end{equation}
This construction requires only $O(\log n)$ ancillary qubits and can be implemented using standard reversible quantum circuits. 

During the subsequent update stage, only the registers $I$, $L$, and $C$ are maintained.  According to the definition of the operator $U_j$, its implementation involves piecewise permutations, which can be realized using reversible comparators together with controlled increment operations. Implementing these operations requires $O(\log m)$ ancillary workspace qubits, while all intermediate information is removed via uncomputation to maintain reversibility. The uncomputation procedure introduces no additional space overhead. Therefore, the QIPC subroutine requires $O(\log m+\log n)$ qubits of space.

In addition, the data stream is only accessed to generate the sequence of operations $Q^{(1)}\cdots Q^{(m)}$, which requires a single pass over the stream. The uncompute stage is performed by applying the inverse operations in reverse order. Since the stream length is known, it does not require any additional access to the data stream.

This completes the proof of Lemma \ref{thm:QIPC}.
\end{proof}

We are now ready to present a two-stage quantum streaming algorithm for entropy estimation. The main result is summarized in the following theorem.
\begin{theorem}\label{main result}
For parameters $\varepsilon,\delta >0$, there exists an $O(\log(1/\delta)\sqrt{\log m}/\varepsilon)$-pass quantum streaming algorithm that outputs an $(\varepsilon,\delta)$-approximation to the entropy of a data stream. The algorithm uses 
\[
O\!\left(\log m+\log n+\log\frac{1}{\varepsilon}\right)
\]
quantum and classical bits of space.
\end{theorem}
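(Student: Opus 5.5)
The plan is to assemble the two-stage algorithm from Lemma \ref{Expectation}, Lemma \ref{thm:QS-RVEE} and Lemma \ref{thm:QIPC}, with a classical preprocessing stage that detects a majority element.

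\textbf{Stage 1: majority detection.} In one pass, run the Boyer--Moore majority vote to obtain a candidate $x$. This needs only $O(\log m+\log n)$ classical bits. In a second pass, count the exact frequency $m_x$ of the candidate. This determines whether $m_x>m/2$ and, if so, gives $x$ and $m_x$ exactly.

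\textbf{Stage 2, case $m_x\le m/2$.} Prepare $\frac{1}{\sqrt m}\sum_q|q\rangle$ directly. Then apply Lemma \ref{Expectation} with the oracle of Lemma \ref{thm:QS-RVEE}, taking $a=\log e$ and $b=\log m$ as in \cite{amitSODA}.

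\textbf{Stage 2, case $m_x>m/2$.} Prepare $\frac{1}{\sqrt{m-m_x}}\sum_{q:x_q\ne x}|q\rangle$ with Lemma \ref{thm:QIPC}, at a cost of one pass per preparation. Run the same oracle construction on the substream $A\backslash\{x\}$: the unitaries $U_{x_j}$ simply act as identity when $x_j=x$, and $m$ is replaced by $m-m_x$ inside $X'_q$. Estimate $\mathbb{E}[X'_q]$ and output
\[
\frac{m-m_x}{m}\widetilde{\mathbb{E}}[X'_q]+\frac{m_x}{m}\log\frac{m}{m_x},
\]
where the second term is computed exactly from the known $m_x$. A relative error $\varepsilon$ on the first term gives relative error at most $\varepsilon$ on the sum, since both terms are nonnegative.

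\textbf{Resource accounting.} Each amplitude-estimation iteration uses a constant number of applications of the state preparation, the oracle and their inverses. Each such application costs $O(1)$ passes: Lemma \ref{thm:QS-RVEE} needs two passes per oracle call and Lemma \ref{thm:QIPC} needs one. The inverse of QIPC is run by reading the stream in the same forward order, using the same commutation/reordering trick as in Lemma \ref{thm:QS-RVEE}. Hence the number of passes is $O(1)$ times the query count of Lemma \ref{Expectation}, namely
\[
O\!\left(\frac{\log(1/\delta)\sqrt{(a+b)(\mathbb{E}+a)}}{\varepsilon\,\mathbb{E}}\right).
\]

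\textbf{Main obstacle: a constant lower bound on the expectation.} Everything hinges on showing $\mathbb{E}[X_q]=\Omega(1)$ in the first case and $\mathbb{E}[X'_q]=\Omega(1)$ in the second. I would argue this as follows.
\begin{itemize}
\item If $m_x\le m/2$, then every $p_i\le 1/2$, so $\log(1/p_i)\ge 1$ and $H(p)\ge 1$.
\item In the restricted stream there is no element of frequency above $(m-m_x)/2$ unless some other element dominates. For that residual case I would either recurse once, or use the bound from \cite{amitSODA,Stacs06} that the restricted entropy is at least a constant whenever the restricted stream is not constant. A single-symbol residual stream is handled separately in Stage 1 by checking exactly, in one extra pass, whether only two distinct symbols occur.
\end{itemize}
Substituting $a=O(1)$, $b=\log m$ and $\mathbb{E}\ge c$ yields $O(\log(1/\delta)\sqrt{\log m}/\varepsilon)$ passes.

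\textbf{Space.} Lemma \ref{Expectation} gives $O(\log m+\log n+\log\frac{a+b}{\varepsilon\mathbb{E}})=O(\log m+\log n+\log(1/\varepsilon))$, because $\log\log m=O(\log m)$. Stage 1 and the median boosting use $O(\log m+\log n+\log\log(1/\delta))$ classical bits; the boosting can be done by storing estimates one at a time in a running structure, or the $\log(1/\delta)$ factor can be absorbed as in the theorem statement. Finally, a union bound over the constant number of stages controls the failure probability $\delta$.
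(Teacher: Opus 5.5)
\textbf{Gap in the majority case.} Your overall architecture matches the paper's: Boyer--Moore plus a counting pass, then either the uniform superposition or QIPC fed into Lemma~\ref{Expectation} with $a=\log e$, $b=\log m$. The majority case, however, fails at one specific choice: you replace $m$ by $m-m_x$ inside $X'_q$. With that normalization, $\mathbb{E}[X'_q]=H(p')$, the entropy of the residual distribution $p'_i=m_i/(m-m_x)$ on $A\backslash\{x\}$, and
\[
H(p)=\frac{m-m_x}{m}H(p')+\frac{m_x}{m}\log\frac{m}{m_x}+\frac{m-m_x}{m}\log\frac{m}{m-m_x}.
\]
Your output formula therefore drops the last term, so it is biased regardless of $\varepsilon$. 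For $m/2<m_x<m$ that term exceeds $\frac{m_x}{m}\log\frac{m}{m_x}$. Hence whenever $H(p')$ is small, you underestimate $H(p)$ by more than a factor of two.

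The same normalization is what creates your ``main obstacle'', and the remedy you sketch does not work:
\begin{itemize}
\item $H(p')$ can be arbitrarily small for a non-constant residual stream. Two symbols with frequencies $k-1$ and $1$ give $H(p')=\Theta(\log k/k)$, so there is no constant lower bound of the kind you attribute to \cite{amitSODA,Stacs06}.
\item A single recursion does not help, because the residual stream can again be dominated by one symbol.
\item With a relative-error target on $H(p')$, the query count of Lemma~\ref{Expectation} grows like $1/H(p')$. That destroys the $O(\log(1/\delta)\sqrt{\log m}/\varepsilon)$ pass bound.
\end{itemize}

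\textbf{The fix.} Keep the original length $m$ in $\lambda_m$, as the paper does following \cite{amitSODA}. For $q$ uniform over the non-$x$ positions, set $X'_q=\lambda_m(r_q)-\lambda_m(r_q-1)$. When $x_q\ne x$, $r_q$ never counts occurrences of $x$. So this is literally the oracle $O$ of Lemma~\ref{thm:QS-RVEE} applied to the QIPC state, with the same range $[-\log e,\log m]$. Then $\frac{m-m_x}{m}\mathbb{E}[X'_q]=\sum_{i\ne x}p_i\log\frac{1}{p_i}$, so your output formula becomes exact. Moreover,
\[
\mathbb{E}[X'_q]=\frac{1}{m-m_x}\sum_{i\ne x}m_i\log\frac{m}{m_i}\ge\log\frac{m}{m-m_x}>1,
\]
because every $m_i$ with $i\ne x$ is at most $m-m_x<m/2$. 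No recursion or two-symbol test is needed, and the rest of your accounting then goes through.

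\textbf{A smaller correction.} The inverse of QIPC cannot be run forward via the reordering trick of Lemma~\ref{thm:QS-RVEE}. The $Q^{(j)}$ are position-dependent permutations that do not commute, unlike the increments $U_{x_j}$. It can still be done in $O(1)$ forward passes: compute the rank $|\{j\le q: x_j\ne x\}|$ with commuting increments in one pass, then erase $q$ from the rank in a second pass. So $O(1)$ passes per amplitude-estimation iteration still holds.
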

\begin{proof}
    To prove Theorem \ref{main result}, we first provide a detailed description of the two-stage quantum streaming algorithm for the streaming entropy estimation problem. We then explicitly characterize its space complexity and the number of passes over the data stream required by the algorithm.

\textbf{Two-stage quantum streaming algorithm.} Our algorithm proceeds in two stages: the first stage detects the presence of a majority element, determining whether any element $x$ has frequency $m_x > m/2$; the second stage performs entropy estimation, adapting to the outcome of the first stage depending on whether a majority element is present. A detailed implementation is provided in Algorithm \ref{entropy algorithm}, and the overall framework of the algorithm is illustrated in Fig. \ref{Fig3}.
\begin{algorithm}[h]
	\caption{Two-stage quantum streaming algorithm}
	\label{entropy algorithm}
	\begin{algorithmic}[1]
		\Statex \textbf{Input:} data stream $A=\langle x_1,x_2,\dots,x_m\rangle$ with $x_j\in [n]$, and parameters $0<\varepsilon,\delta<1$
		\Statex \textbf{Stage 1: majority element detection}
    	\State Apply the Boyer-Moore majority vote algorithm to identify a potential majority item $x$ with one pass over the stream.
		\State Make another pass over $A$ to determine the frequency $m_x$ of $x$.
		\Statex \textbf{Stage 2: entropy estimation}
		\If{$m_x\le m/2$}
        \State Initialize the state $|\Phi\rangle$ using the operator $F_m$.
		\State Invoke Lemma \ref{Expectation} to compute an approximation $\tilde{\mu}$ of $\mathbb{E}[X_q]$ given the oracle $O$.
		\State \textbf{Output:} $\tilde{\mu}$.
		
		\Else
        \State Prepare the state $|\psi\rangle$ using the proposed quantum subroutine QIPC. 
		\State Invoke Lemma \ref{Expectation} to compute an approximation $\tilde{\mu}'$ of $\mathbb{E}[X_q]^{'}$ given the oracle $O$.
		\State \textbf{Output:} $\frac{m-m_x}{m}\tilde{\mu}'+\frac{m_x}{m}\log\frac{m}{m_x}$.
		
		\EndIf
	\end{algorithmic} 
\end{algorithm}

\begin{figure*}[t]
	\centering
	\includegraphics[width=0.7\textwidth]{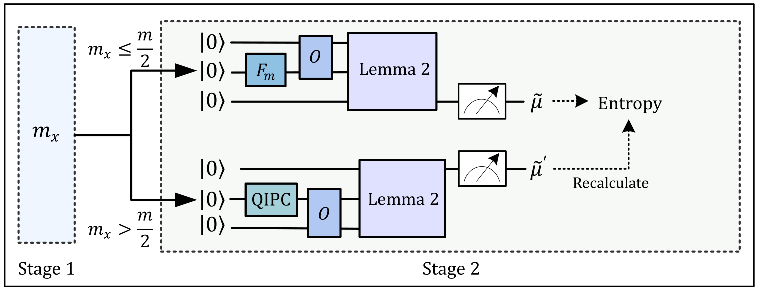}
	\caption{Overview of the proposed two-stage quantum streaming algorithm. Stage 1 determines the frequency of the majority element in the stream. Stage 2 uses $F_m$, which prepares an $m$-dimensional uniform superposition state, or QIPC, which constructs a superposition over the positions of all stream elements in $A$ except $x$. The oracle $O$ is implementable via the quantum streaming algorithm described in Lemma \ref{thm:QS-RVEE}. Lemma \ref{Expectation} is the proposed quantum query algorithm for estimating the expected value of a data stream random variable.}
	\label{Fig3}
\end{figure*}
\textbf{Stage 1: Majority element detection.} Detecting the majority element in a data stream is critical for our two-stage quantum streaming algorithm, determines the procedure for entropy estimation. We perform two passes over the data stream to verify whether the frequency of the majority element satisfies $m_x \le m/2$ or exceeds it ($m_x > m/2$). In the first pass, a candidate majority element $x$ is identified using the method of \cite{boyer1991mjrty}, and in the second pass, its frequency $m_x$ is counted. Regardless of whether the candidate $x$ is the true majority element, this procedure enables us to distinguish the two scenarios. Specifically, if \( m_x > m/2 \), then $x$ is indeed the majority element. Otherwise, any element \( x' \) with a frequency greater than or equal to \( x \), satisfies
\begin{equation*}
	m_x \leq m_{x{'}} \leq m/2,
\end{equation*}
implying that \( m_x \) suffices to differentiate the two cases.

\textbf{Stage 2: Entropy estimation.} After determining $m_x$, we classify entropy estimation methods into two cases. In the first case, there is no majority item, and estimating the entropy is relatively straightforward: our quantum query algorithm yields a reliable estimate. In the second case, a majority item is present, which can significantly affect the cost of entropy estimation. To address this, we temporarily remove the majority element and estimate the entropy of the remaining data. Once this estimate is obtained, the contribution of the removed element is added back to recover the total entropy.\\
\textbf{Case 1 ($m_x\le \frac{m}{2}$):}
In this case, the algorithm reaches line 6, and returns the estimate $\tilde{\mu}$. This estimate is obtained by applying Lemma \ref{Expectation} and serves as an $(\varepsilon,\delta)$-approximation of $\mathbb{E}[X_q]$. The lemma requires preparing a distribution over stream positions in order to invoke the oracle $O$, a prerequisite explicitly in the proof of Lemma \ref{Expectation}. In Case 1, this distribution is uniform over $q\in[m]$, which can be prepared by applying the operator $F_m$ to $|0\rangle$, yielding
\begin{equation}
    \left|\Phi\right\rangle=\frac{1}{\sqrt{m}}\sum_{q=1}^{m}|q\rangle.
\end{equation}
Here, $F_m$ denotes any transformation satisfying $F_m:|0\rangle\mapsto\frac{1}{\sqrt{m}}\sum_{i=1}^{m}|i\rangle$. We then apply Lemma \ref{Expectation} to estimate $\mathbb{E}[X_q]$ with failure probability at most $\delta$ and error at most $\varepsilon \mathbb{E}[X_q]$. According to \cite{amitSODA}, $X_q\in[-\log e,\log m]$, giving the parameters in Lemma \ref{Expectation} as $a=\log e$ and $b=\log m$, while $\mathbb{E}[X_q]\ge 1$. It follows that the number of oracle calls to $O$ is $O\left(\frac{2\log\left(1/\delta\right)\sqrt{\log em}}{\varepsilon}\right)$. Consequently, the estimate $\tilde{\mu}$ yields an $(\varepsilon,\delta)$-approximation of the entropy.\\
\textbf{Case 2 ($m_x> \frac{m}{2}$):} 
In this case, the algorithm reaches line 10, and outputs $\frac{m-m_x}{m}\tilde{\mu}'+\frac{m_x}{m}\log\frac{m}{m_x}$. When \(m_x > m/2\), we instead consider the random variable $X'_q$ defined on the substream \(A \backslash \{x\}\). We first invoke Lemma \ref{Expectation} to obtain an $(\varepsilon,\delta)$-approximation $\tilde{\mu}'$ of $\mathbb{E}[X'_q]$. Its application requires preparing the distribution over stream positions restricted to non-\(x\) elements, given by
\begin{equation}
	\left|\psi\right\rangle=\frac{1}{\sqrt{m-m_x}}\sum_{q:x_q\ne x}\left|q\right\rangle,
\end{equation}
which can be constructed using the Quantum Index-Position Conversion (QIPC) procedure in Lemma \ref{thm:QIPC}. We then apply Lemma \ref{Expectation} to estimate $\mathbb{E}[X'_q]$ with failure probability at most $\delta$ and error at most $\varepsilon \mathbb{E}[X'_q]$. Since $X'_q$ satisfies the same bounds as $X_q$ and also satisfies $\mathbb{E}[X'_q]\ge 1$, as established in \cite{amitSODA}, the number of oracle calls to \(O\) is $O\left(\frac{2\log\left(1/\delta\right)\sqrt{\log em}}{\varepsilon}\right)$. Finally, combining \(\tilde{\mu}'\) with the contribution of the majority element yields
\begin{equation}
	\frac{m-m_x}{m}\tilde{\mu}'+\frac{m_x}{m}\log\frac{m}{m_x}.
\end{equation}
Indeed, the error between the estimate and the entropy satisfies
	\begin{equation}
		\begin{split}
        \left|\frac{m-m_x}{m}\tilde{\mu}'+\frac{m_x}{m}\log\frac{m}{m_x}-H(p)\right|&=\left|\frac{m-m_x}{m}\tilde{\mu}'+\frac{m_x}{m}\log\frac{m}{m_x}-(\frac{m-m_x}{m}\mathbb{E}[X'_q]+\frac{m_x}{m}\log\frac{m}{m_x}\right)|\\
			&\le\frac{m-m_x}{m}|\tilde{\mu}'-\mathbb{E}[X'_q]|\\
			&\le \frac{m-m_x}{m}\varepsilon\mathbb{E}[X'_q]\\
			&\le\varepsilon\biggl(\frac{m-m_x}{m}\mathbb{E}[X'_q]+\frac{m_x}{m}\log\frac{m}{m_x}\biggr)\\
			&\le \varepsilon H(p).
		\end{split}
	\end{equation}
	Hence, the estimated entropy $\frac{m-m_x}{m}\tilde{\mu}'+\frac{m_x}{m}\log\frac{m}{m_x}$ is guaranteed to be an $(\varepsilon, \delta)$-approximation of the true entropy whenever $m_x > m/2$.

Finally, we analyze the space and pass complexities of the algorithm.

\textbf{Space and pass complexity analysis.}
	 We first analyze the space complexity of the proposed two-stage quantum streaming algorithm.
	
	In the first stage, the algorithm identifies a candidate majority element $x$ using $\log m+\log n$ bits, and computes its frequency using an additional $\log m$ bits.
	
	In the second stage, we distinguish two cases based on $m_x$. For $m_x \le m/2$, the quantum query algorithm uses $O\left(\log m+\log n+\log\frac{1}{\varepsilon}\right)$ qubits to output an $(\varepsilon,\delta)$-approximation of the entropy. The preparation of $|\Phi\rangle$ introduces an additional ancilla cost of $O(\log m)$ qubits (e.g., via \cite{wiebe2015quantum}), which is absorbed into the overall space complexity. For $m_x>m/2$, the state $|\psi\rangle$ is prepared using the QIPC subroutine, which requires $O(\log m+\log n)$ qubits by Lemma \ref{thm:QIPC}. Applying Lemma \ref{Expectation} yields an estimate $\tilde{\mu}'$ with the same space complexity as in the previous case. 
	
	In summary, the algorithm requires 
	\begin{equation*}
		O\left(\log m+\log n+\log\frac{1}{\varepsilon}\right)
	\end{equation*}
	bits and qubits of space. 

	We now analyze the pass complexity of the proposed algorithm.
    
    As described in Algorithm \ref{entropy algorithm}, the first stage makes two passes over the stream to determine the most frequent element $x$ and its frequency $m_x$. The first pass identifies a candidate element, while the second computes the frequency of $x$. 
	
	In the second stage, when \( m_x \le m/2 \), Lemma \ref{Expectation} makes $O\left(\frac{2\log(1/\delta)\sqrt{\log em}}{\varepsilon}\right)$ oracle calls, which translates into $O\left(\frac{4\log(1/\delta)\sqrt{\log em}}{\varepsilon}\right)$ stream passes. For $m_x > m/2$, Lemma \ref{Expectation} requires $O\left(\frac{2\log(1/\delta)\sqrt{\log em}}{\varepsilon}\right)$ calls to the oracle $O$ together with the quantum subroutine QIPC. This yields  $O\left(\frac{6\log(1/\delta)\sqrt{\log em}}{\varepsilon}\right)$ stream passes, where for each call to $O$ requires two passes and QIPC requires one pass.
	
    In both cases, the total number of passes over the data stream is bounded by
	\begin{equation*}
		O\left(\frac{\log(1/\delta)\sqrt{\log m}}{\varepsilon}\right).
	\end{equation*}
    
   This completes the proof of Theorem \ref{main result}.
\end{proof}

\subsection{Classical space lower bounds for streaming entropy estimation}\label{Classical lower bound}

We generalize the classical space lower bound for one-pass streaming entropy estimation in \cite{amitSODA} to the multi-pass setting. More specifically, we establish a classical space lower bound for multi-pass streaming entropy estimation via a reduction from the Gap Hamming Distance (GHD) problem. The corresponding result is summarized in the following theorem.

\begin{theorem}[Lower bound on streaming entropy estimation] \label{lowerbound}
Any randomized $T$-pass streaming algorithm that outputs an $(\varepsilon,\delta)$-approximation of the Shannon entropy must use
\[
\Omega\!\left(\frac{1}{T \varepsilon^2 \log^2(1/\varepsilon)}\right)
\]
bits of space.
\end{theorem}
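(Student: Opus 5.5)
We reduce from the two-party Gap Hamming Distance problem and use the standard simulation of a $T$-pass streaming algorithm by a $(2T-1)$-round communication protocol. Set $N=\Theta(1/\varepsilon^2)$ (up to the $\log^2(1/\varepsilon)$ loss discussed below). In $\mathrm{GHD}_N$, Alice holds $u\in\{0,1\}^N$ and Bob holds $v\in\{0,1\}^N$. They must decide whether $\Delta(u,v)\ge N/2+\sqrt N$ or $\Delta(u,v)\le N/2-\sqrt N$. The randomized communication complexity of this problem is $\Omega(N)$ for any number of rounds, by the Chakrabarti--Regev bound and its simplifications by Sherstov and Vidick. Suppose a $T$-pass algorithm uses $s$ bits. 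Alice runs it on her half of the stream and sends the memory state to Bob. Bob continues on his half and, if $T>1$, sends the state back. This repeats for all passes, so the total communication is at most $(2T-1)s$. Hence $s=\Omega(N/T)$, provided the stream built from $(u,v)$ lets an $(\varepsilon,\delta)$-approximation of $H(p)$ decide $\mathrm{GHD}_N$.

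The construction follows the one-pass reduction of \cite{amitSODA}. Alice inserts the tokens $(i,u_i)$ for $i\in[N]$, and Bob inserts $(i,v_i)$, with pairs encoded as elements of $[2N]$. Index $i$ contributes one element of frequency $2$ if $u_i=v_i$, and two elements of frequency $1$ otherwise. With $m=2N$ and $D=\Delta(u,v)$,
\[
H(p)=\frac{1}{2N}\Bigl(2D\log(2N)+2(N-D)\log N\Bigr)=\log N+\frac{D}{N},
\]
so $D$ is an affine function of $H$. The two GHD cases differ in $D/N$ by $2/\sqrt N$. An additive error below $1/\sqrt N$ in $H$ therefore suffices, which means a relative error $\varepsilon\approx 1/(\sqrt N\log N)$. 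This gives $N=\Theta\bigl(1/(\varepsilon^2\log^2(1/\varepsilon))\bigr)$, which is exactly where the $\log^2$ factor in the theorem comes from.

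The remaining step is to plug this choice of $N$ into the communication bound, giving $s=\Omega\bigl(1/(T\varepsilon^2\log^2(1/\varepsilon))\bigr)$. The error probability $\delta$ just needs to be a small constant so that it matches the GHD success requirement.

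The main obstacle is making sure the lower bound for $\mathrm{GHD}$ holds for protocols with $2T-1$ rounds, and not only one-way. I would cite the multi-round $\Omega(N)$ bound directly rather than the one-way bound used in \cite{amitSODA}. I would also check carefully that the additive-to-relative conversion is valid: $H=\Theta(\log N)$ throughout, so constant factors only change the hidden constant.
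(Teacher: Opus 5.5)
Your proposal is correct and matches the paper's proof essentially step for step. It uses the same GHD reduction, with Alice and Bob inserting $j+mx_j$ and $j+my_j$. It relies on the same identity $H(p)=\log m + d(\mathbf{x},\mathbf{y})/m$ and the same $(2T-1)$-round simulation yielding $(2T-1)S=\Omega(m)$ from the multi-round GHD bound of Chakrabarti--Regev. Finally, it makes the same choice $\varepsilon\approx 1/(\sqrt{m}\log m)$, which is where the $\log^2(1/\varepsilon)$ factor comes from.
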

\begin{proof}
We formally define the GHD problem. 
Two parties, Alice and Bob, each hold binary strings $\mathbf{x}=\{x_j\}_{j=1}^{m}, \mathbf{y}=\{y_j\}_{j=1}^{m} \in \{0,1\}^m$, respectively, and must decide whether the Hamming distance $d(\mathbf{x}, \mathbf{y})$ satisfies one of the following conditions:
	\begin{equation*}
		d(\mathbf{x},\mathbf{y}) \le m/2-\sqrt{m} \quad\text{or}\quad d(\mathbf{x},\mathbf{y}) \ge m/2 + \sqrt{m}.
	\end{equation*}
	The goal is to determine the value of
	\begin{equation}
		\text{GHD}(\mathbf{x},\mathbf{y})=\begin{cases}
			\text{near},& \text{if $d(\mathbf{x},\mathbf{y}) \le m/2-\sqrt{m}$}, \\
			\text{far},& \text{if $d(\mathbf{x},\mathbf{y}) \ge m/2 + \sqrt{m}$}.
		\end{cases}
	\end{equation}
	It is known that any randomized protocol for the GHD problem requires $\Omega(m)$ bits of communication \cite{chakrabarti2011optimal}.

Based on the definition of the GHD problem, we now present a reduction from GHD to entropy estimation. The reduction proceeds in three steps. First, the input of the GHD problem is mapped to an input stream for entropy estimation. Second, we show that the estimated entropy from the protocol can be used to solve GHD by selecting an appropriate error parameter \(\varepsilon\). Third, a multi-pass streaming algorithm is employed to construct a multi-round communication protocol. We now detail the reduction process.
    
    First, Alice and Bob jointly construct the stream $A$ for the entropy estimation problem from their respective inputs to the GHD problem. Specifically, each party constructs the substreams $A_{\mathbf{x}} = \{ j+mx_j\}_{j=1}^{m}$ and $A_{\mathbf{y}} = \{ j+m y_j\}_{j=1}^{m}$, respectively. The full stream is then given by $A =A_{\mathbf{x}} \circ A_{\mathbf{y}}$, with total length $2m$. 
    
	Next, we analyze how the estimated entropy $\tilde{H}(p)$ can be used to distinguish whether $\text{GHD}(\mathbf{x},\mathbf{y})$ is in the ``near'' or ``far'' case. From the definitions of entropy and GHD, the entropy of the concatenated stream \(A\) is
	\begin{equation}
		\begin{aligned}
			H(p)&=\frac{2d(\mathbf{x},\mathbf{y})}{2m}\log 2m+\left(m-d(\mathbf{x},\mathbf{y})\right)\frac{2}{2m}\log\frac{2m}{2}=\log m+\frac{d(\mathbf{x},\mathbf{y})}{m}.
		\end{aligned}
	\end{equation}
	Here, $m-d(\mathbf{x},\mathbf{y})$ elements appear twice with probability $1/m$, while $2d(\mathbf{x},\mathbf{y})$ elements appear once with probability $1/(2m)$. To distinguish the ``near'' and ``far'', it suffices to ensure that
	\begin{align}
		(1+\varepsilon)
		&\max_{\text{GHD}(\mathbf{x},\mathbf{y})=\text{near}} H(p) \notag< (1-\varepsilon)
		\min_{\text{GHD}(\mathbf{x},\mathbf{y})=\text{far}} H(p),
	\end{align}
	where the maximum and minimum are attained at $d(\mathbf{x},\mathbf{y})=\frac{m}{2}-\sqrt{m}$ and $d(\mathbf{x},\mathbf{y})=\frac{m}{2}+\sqrt{m}$, respectively. In the ``near'' case, we obtain the upper bound
	\begin{equation}\label{near_bound}
		\max_{\text{GHD}(\mathbf{x},\mathbf{y})=\text{near}}\tilde{H}(p)=(1+\varepsilon)\max_{\text{GHD}(\mathbf{x},\mathbf{y})=\text{near}}H(p)\le\left(1+\varepsilon)(\log m+\frac{m/2-\sqrt{m}}{m}\right),
	\end{equation}
	while in the ``far'' case, the lower bound is
	\begin{equation}\label{far_bound}
		\min_{\text{GHD}(\mathbf{x},\mathbf{y})=\text{far}}\tilde{H}(p)=(1-\varepsilon)
		\min_{\text{GHD}(\mathbf{x},\mathbf{y})=\text{far}} H(p)\ge\left(1-\varepsilon)(\log m+\frac{m/2+\sqrt{m}}{m}\right).
	\end{equation}
	Combining Eqs. (\ref{near_bound}, \ref{far_bound}), We obtain $\varepsilon<\frac{1}{\sqrt{m}(\log m+1/2)}$,which leads to the decision rule:
	\begin{equation}
		\begin{split}
			\text{GHD}&(\mathbf{x}, \mathbf{y})=\begin{cases}
				\text{near}&\text{if\quad $\tilde{H}(p)\le \log m+1/2-\frac{1}{\sqrt{m}}$}, \\
				\text{far}&\text{if\quad $\tilde{H}(p) \geq \log m+1/2+\frac{1}{\sqrt{m}}$}.
			\end{cases}
		\end{split} 
	\end{equation}
	Therefore, any algorithm $\mathcal{C}$ produces an $(\varepsilon,\delta)$-approximation of $H(p)$ with $\varepsilon<\frac{1}{\sqrt{m}(\log m+1/2)}$ suffices to solve the GHD problem.
    
We now use a multi-pass streaming entropy algorithm to construct a multi-round communication protocol. Let \(\mathcal{C}\) be a \(T\)-pass streaming algorithm that provides an \((\varepsilon, \delta)\)-approximation of the true entropy \(H(p)\) using at most \(S\) bits of memory, and let \(\tilde{H}(p)\) denote its output. For the constructed data stream $A=A_{\mathbf{x}}\circ A_{\mathbf{y}}$, running $\mathcal{C}$ induces a two-party randomized communication protocol with $2T-1$ rounds. 
\begin{figure*}[t]
	\centering
	\includegraphics[width=0.85\textwidth]{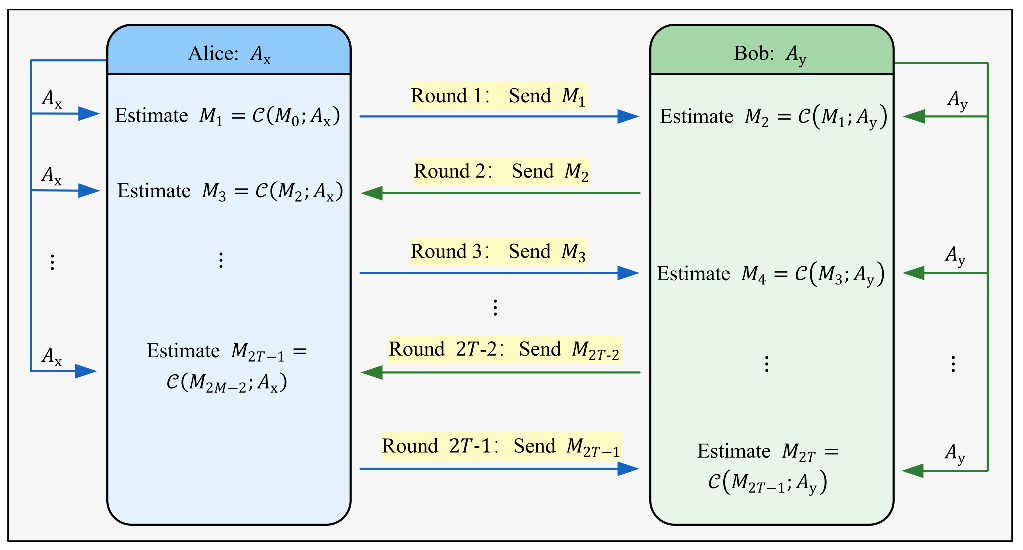}
	\caption{Schematic diagram of the $(2T-1)$-round communication protocol derived from the streaming algorithm $\mathcal{C}$. Here, \(M_0\) denotes an empty message representing the initial stage of the protocol. Alice first processes the stream block \(A_{\mathrm{x}}\) using algorithm \(\mathcal{C}\), starting from \(M_0\), and sends the resulting memory state \(M_1\) to Bob. Bob then resumes the computation from the received state \(M_1\), processes \(A_{\mathrm{y}}\) using \(\mathcal{C}\), and obtains an updated memory state \(M_2\), which is sent back to Alice. Alice subsequently continues the computation from \(M_2\), processes \(A_{\mathrm{x}}\), and sends the resulting state \(M_3\) to Bob. This procedure continues alternately, with each party resuming the computation from the memory state received from the other. More generally, a \(T\)-pass streaming algorithm naturally induces a \((2T-1)\)-round communication protocol, where each transmitted message corresponds to the memory state maintained by the streaming algorithm. After the final transmission, Bob resumes the computation from the received state and processes \(A_{\mathrm{y}}\) to produce the output $M_{2T}$.}
	\label{Fig4}
\end{figure*}
\paragraph{$(2T-1)$-round communication protocol}
For notational convenience, let \(\mathcal{C}(M_{2i-2},A_{\mathbf{x}})\) the memory state obtained by continuing the execution of the streaming algorithm $\mathcal{C}$ from the memory state $M_{2i-2}$ and processing the stream block \(A_{\mathbf{x}}\). Analogously, \(\mathcal{C}(M_{2i-1},A_{\mathbf{y}})\) denotes the memory state obtained by continuing \(\mathcal{C}\) from the state $M_{2i-1}$ and processing the stream block \(A_{\mathbf{y}}\). Fig. \ref{Fig4} illustrates this transformation.

For \(i=1,2,\dots,T\) (the \(i\)-th pass):
\begin{enumerate}[itemsep=2pt, topsep=2pt]
\item \textbf{Alice processes \(A_{\mathbf{x}}\):}
\begin{itemize}
\item Alice resumes algorithm \(\mathcal{C}\) from memory state \(M_{2i-2}\), where \(M_0\) denotes an empty message, and processes the local stream block \(A_{\mathbf{x}}\), resulting in
\[
M_{2i-1}
=
\mathcal{C}(M_{2i-2},A_{\mathbf{x}}).
\]

\item Alice sends \(M_{2i-1}\) to Bob (communication round \(2i-1\)).
\end{itemize}

\item \textbf{Bob processes \(A_{\mathbf{y}}\):}
\begin{itemize}
\item Bob resumes algorithm \(\mathcal{C}\) from the received memory state \(M_{2i-1}\) and processes the local stream block \(A_{\mathbf{y}}\), resulting in
\[
M_{2i}
=
\mathcal{C}(M_{2i-1},A_{\mathbf{y}}).
\]

\item If \(i<T\), Bob sends \(M_{2i}\) back to Alice (communication round \(2i\)); otherwise (\(i=T\)), Bob outputs the final answer based on \(M_{2T}\).
\end{itemize}
\end{enumerate}
In the constructed protocol, each transmitted message has size at most $S$ bits, and the protocol consists of $2T-1$ rounds. Hence, the total communication complexity is bounded by $(2T-1)S$. Since the constructed protocol solves the GHD problem, the communication lower bound for GHD implies that
\begin{equation}
    (2T-1)S\ge\Omega(m).
\end{equation}
Using $\varepsilon<\frac{1}{\sqrt{m}(\log m+1/2)}$ and choosing $\varepsilon\approx \frac{1}{\sqrt{m}\log m}$, we obtain the space lower bound for any multi-pass streaming algorithm:
\begin{equation}
    S=\Omega\left(\frac{1}{T\varepsilon^2\log^2 \varepsilon^{-1}}\right).
\end{equation}
	This establishes the claimed lower bound for any $T$-pass streaming algorithm approximating the entropy.
\end{proof}

\section*{Data availability}
No datasets were generated or analyzed during the current study.
\section*{Code availability}
No Code were generated or analyzed during the current study.

\backmatter

\bmhead{Supplementary information}
The online version contains
supplementary material available at ...

\bmhead{Acknowledgements}
This work was supported by the National Key Research and Development Program of China (Grant No. 2024YFB4504004), National Natural Science Foundation of China (Grants No. 62171131, 92465202, 62272492), Fujian Province Natural Science Foundation (Grant No. 2022J01186 and 2023J01533), Fujian Province Young and Middle-aged Teacher Education Research Project (Grant No. JAT231018) and the Guangdong Provincial Quantum Science Strategic Initiative (Grant No. GDZX2303007, GDZX2503001).

\bibliography{sn-bibliography}

\end{document}